%Header material
\documentclass[11pt]{article}
\usepackage[a4paper, margin=1in]{geometry}

%%
%\usepackage{algorithm} % algorithm environment
%\usepackage{amsmath, amssymb,amsthm}
%\usepackage{amssymb, amsmath, amsthm}
%\usepackage{appendix} % appendix numbering
%\usepackage{arydshln} % dashed lines in tables
%\usepackage{bigstrut} %needed to keep blockarrays from being crunched
%\usepackage{blkarray} % for multiply-aligned equations
%\usepackage{booktabs} % nice tables
%\usepackage{caption}
%\usepackage{changepage}  % for adjustwidth for preface
%\usepackage{color}
%\usepackage{dsfont}
%\usepackage{enumitem} % detailed list control
%%\usepackage{etoolbox} % space around equations
%\usepackage{float} %control over float placement
%\usepackage[T1]{fontenc} % needed for something about rendering diacritics
%\usepackage[letterpaper, margin=1in]{geometry} % page size
%\usepackage{graphicx}
%\usepackage[colorlinks=true, allcolors=LinkBlue, backref=page]{hyperref} % ref links
%\usepackage{lipsum}
%\usepackage{lmodern}
%\usepackage{mathrsfs} % mathscr style
%\usepackage{minitoc} % TOC's at the beginning of each chapter
%\usepackage{microtype} % avoid overfull line issues
%\usepackage{multirow}
%\usepackage{natbib} % for \citet and \citep
%\usepackage{nicefrac}
%\usepackage{scalerel}
%\usepackage{setspace} % double spacing
%\usepackage{soul} % for \st = strikethrough
%\usepackage{stmaryrd}
%%\usepackage{subcaption}
%\usepackage{subfig}
%\usepackage{titlesec} % allegedly not actually incompatible with minitoc
%\usepackage{titling}
%\usepackage{verbatim}
%\usepackage[dvipsnames]{xcolor} % for named colors
%\usepackage{xfrac}

% Packages and package options
\usepackage{algorithm}
\usepackage{amssymb, amsmath, amsthm}
\usepackage{booktabs}
\usepackage{enumitem} % detailed list control
\usepackage{float} % for 'H' forced placement of floats
\usepackage{graphicx}
\usepackage[colorlinks=true, linkcolor=LinkBlue, citecolor=LinkBlue, backref=page]{hyperref}
\usepackage{lmodern}
\usepackage{mathrsfs} % for mathscr
\usepackage{multirow}
\usepackage{natbib} % for \citet and \citep
\usepackage{scalerel}
\usepackage{setspace}
\usepackage{soul} % for \st = strikethrough
\usepackage{subfig}
\usepackage[T1]{fontenc}
\usepackage{verbatim}
\usepackage[dvipsnames]{xcolor}

\bibliographystyle{abbrvnat}
\setlist{noitemsep, topsep=0pt} % for enumitem
\definecolor{LinkBlue}{rgb}{.15, .25, .85} %for hyperref
\setstretch{1.3} % for setspace

% Maybe unnecessary:
%\usepackage{stmaryrd} %using this overwrites \bigtriangleup, which we use for \symdiff
%\usepackage[latin9]{inputenc} %necessary?
%\setlength\voffset{-0.25in}
%\setlength\textheight{9in}
%\parindent=0pt
%\setlength{\parskip}{10pt plus 1pt minus 1pt}
%\renewcommand{\sfdefault}{lmss}
%\renewcommand{\ttdefault}{lmtt}
%\synctex=-1
%\usepackage{esint}
%\usepackage{caption}

% Custom floats, theorems, etc.

\providecommand{\lemmaname}{Lemma}
\providecommand{\propositionname}{Proposition}
\providecommand{\theoremname}{Theorem}
\providecommand{\corname}{Corollary}

\theoremstyle{plain} 

\theoremstyle{definition} 
\theoremstyle{definition} 

\newtheorem{lemma}{\protect\lemmaname}[section]

 % For MH characterization. Redundant with 'prop'

% Algorithms
\providecommand{\algorithmname}{Algorithm}
\newfloat{algorithm}{tbp}{loa}
\floatname{algorithm}{\protect\algorithmname}

% Environment shortcuts

% Equations
\newcommand{\eq}[1]{\begin{align*}#1\end{align*}} % alignable equation*
 % centered equation*

 % to highlight areas that need attention

% Number systems
\newcommand{\R}{\ensuremath{\mathbb{R}}}

% Math operators

% Expectation, variance, etc.
\newcommand{\E}{\operatorname{\mathbb{E}}}

\renewcommand{\P}{\operatorname{\mathbb{P}}}

% Distributions
\newcommand{\Bern}{\operatorname{Bern}}

\newcommand{\N}{\operatorname{N}}

\newcommand{\Unif}{\operatorname{Unif}}

% Script and calligraphic letters

\newcommand{\calB}{\mathcal{B}}

% Other symbols

\newcommand{\bigO}{\mathcal{O}}

\newcommand{\iidsim}{\stackrel{iid}{\sim}}
\newcommand{\diff}{\, \mathrm{d} } % e.g. \int f(x) \diff x

\newcommand{\qaq}{\quad \text{and} \quad}
 % requires romanbar package

 % defaults to the very long version
 % increase default size of \big

%%

% Document-specific commands
\newcommand{\TV}{{\sf TV}}
\newcommand{\symdiff}{ \operatorname{{\scalebox{.9}{$\bigtriangleup$}}}}

\newcommand{\tx}{\tilde{x}}
\newcommand{\ty}{\tilde{y}}

\newcommand{\n}{_{-1}}
\newcommand{\xo}{\xi_1}
\newcommand{\xn}{\xi_{-1}}

\newcommand{\zo}{\{0,1\}}
\newcommand{\cd}{\cdot}
\newcommand{\trx}{b_x x' + (1 - b_x) x}
\newcommand{\try}{b_y y' + (1 - b_y) y}
\newcommand{\g}{\, | \,}
\newcommand{\xy}{(x,y)}
\newcommand{\XY}{(X,Y)}
\newcommand{\xyp}{(x',y')}

\newcommand{\bxy}{(b_x,b_y)}

\newcommand{\bp}{\bar P}
\newcommand{\bq}{\bar Q}

\newcommand{\bb}{\bar B}

\newcommand{\Gmax}{\Gamma^\mathrm{max}}
\newcommand{\sas}{\ \text{and} \ }
\newcommand{\sm}{\setminus}

%%%%%%%%%%%%%%%%%%%%%%%%%%%%%%%%%%%%%%%%%%%%%%%

\begin{document}

%%%% temporary TOC
%\tableofcontents
%\vfill
%\pagebreak

%%%

\title{Couplings of the Random-Walk Metropolis algorithm}

\author{
	John O'Leary\thanks{
		    Department of Statistics, Harvard University, Cambridge, MA, USA.
		    Email: joleary@g.harvard.edu}
}
\maketitle

\begin{abstract}
	Couplings play a central role in contemporary Markov chain Monte Carlo methods and in the analysis
	of their convergence to stationarity. In most cases, a coupling must induce relatively fast meeting
	between chains to ensure good performance. In this paper we fix attention on the random walk
	Metropolis algorithm and examine a range of coupling design choices. We introduce proposal and
	acceptance step couplings based on geometric, optimal transport, and maximality considerations. We
	consider the theoretical properties of these choices and examine their implication for the meeting
	time of the chains. We conclude by extracting a few general principles and hypotheses on the design
	of effective couplings.
\end{abstract}

\section{Introduction} \label{sec:intro}

In commemorating the 50th anniversary of the Metropolis--Hastings (MH) algorithm,
\citet{dunson2020hastings} point to the unbiased estimation method of \citet{Jacob2020} as a leading
strategy for the parallelization of Markov chain Monte Carlo (MCMC) algorithms. However, they note a challenge:
while it is usually easy to find a transition kernel coupling with properties needed for this
approach, that choice is rarely unique, and the wrong selection can result in low estimator
efficiency. The design of efficient couplings is, as they write, ``an exciting direction that we
expect will see growing attention among practitioners.'' In this study we take up this important
question.

From the early days of Markov chain theory \citep[e.g.][]{doeblin1938expose, harris1955chains,
	pitman1976coupling, aldous1983random, Rosenthal1995}, couplings have played a key role in the
analysis of convergence to stationarity. In recent years they have also been used to formulate MCMC
diagnostics \citep{johnson1996studying, johnson1998coupling, biswas2019estimating}, variance
reduction methods, \citep{neal2001improving,Goodman2009,piponi2020hamiltonian}, and new sampling and
estimation strategies \citep{propp:wilson:1996, fill1997interruptible,
	neal1999circularly,flegal2012exact, glynn2014exact, Jacob2020, heng2019}. Couplings that produce
smaller meeting times generally yield better results in the form of tighter bounds, more variance
reduction, greater computational efficiency, or more precise estimators.

Thus, the design of efficient couplings has been an important question for almost the entire history
of the coupling method. When a coupling is not required to be co-adapted to the
chains in question, simple arguments show that a maximal coupling of the chains exists
and results in meeting at the fastest rate allowed by the coupling inequality
\citep{griffeath1975maximal, goldstein1979maximal}. However when the coupling must be implementable and
Markovian, maximal couplings are known only in special cases
\citep{burdzy2000efficient, hsu2013maximal, bottcher2017markovian}. Markovian couplings are easy to
work with and are required for many of the applications above, but they are rarely maximal.

In this study we consider transition kernel couplings of the Random Walk Metropolis (RWM) algorithm
\citep{Metropolis1953}, which is perhaps the oldest, simplest, and best-understood MCMC method.
Transition kernel couplings \citep[][chap. 19]{douc2018markov} are Markovian by construction.
Explicit and implementable couplings of the RWM kernel seem to originate with
\citet{johnson1998coupling}. These methods were taken up in \citet{Jacob2020}, which found that
apparently minor differences in coupling design can have significant implications for meeting times,
especially in relation to the dimension of the state space. In this paper we continue this line of
inquiry and take a pragmatic approach to the question of coupling design. We ask: what options are
available for coupling the RWM kernel, how do these choices affect meeting times, and what lessons
can we learn from this simple case?

We begin by introducing the essential ingredients of an RWM kernel coupling. First, we consider
proposal distribution couplings, devoting some attention to maximal couplings of the multivariate
normal distribution. Next, we turn to coupling at the accept/reject step. Any coupling of the RWM kernel can be realized as a
proposal coupling followed by an acceptance step coupling \citep{o2021couplings}, so this focus on separate proposal and
acceptance couplings involves no loss of generality. We conclude with a range of simulation
exercises to understand how various coupling design options affect meeting times. We conclude with some
stylized facts and advice on the construction of efficient couplings for the RWM algorithm and beyond.

% Section 2
\section{Setting and notation} \label{sec:defn}

Throughout the following we write $a \wedge b := \min(a,b)$, $a \vee b := \max(a,b)$, and $\mathcal{L}(Z)$ for
the law of a random variable $Z$. We write
$\Bern(\alpha)$ for the Bernoulli distribution on $\zo$ with $\P(\Bern(\alpha)=1) = \alpha$,
$\N(\mu, \Sigma)$ for the multivariate normal distribution with mean $\mu$ and covariance matrix
$\Sigma$, and $\N(z ; \mu, \Sigma)$ for the density of this distribution evaluated at a point $z$.

Fix a target distribution $\pi$ on a state space $(\R^d, \calB)$, where $\calB$ is the Borel
$\sigma$-algebra on~$\R^d$. Let $Q : \R^d \times \calB \to [0,1]$ be a proposal kernel. Thus
$Q(x,\cd)$ is a probability measure for all $z \in \R^d$ and $Q(\cd, A):\R^d \to [0,1]$ is
measurable for all $A \in \calB$. We interpret $Q(x,A)$ as the probability of proposing some point
$x' \in A$ when the current state is $x$. Assume $\pi$ has density $\pi(\cd)$ and $Q(x,\cd)$ has
density $q(x,\cd)$ for $x \in \R^d$, all with respect to Lebesgue measure. The MH acceptance ratio \citep{hastings:1970} is then defined as $a(x, x') := 1 \wedge \tfrac{q(x',x)}{q(x, x')} \tfrac{\pi(x')}{\pi(x)}$. In this study we focus on the RWM algorithm with
multivariate normal proposal increments, so $Q(x,\cd) = \N(x, I_d \, \sigma_d^2)$ for all $x
\in \R^d$. In this case ${ q(x, \tx) = q(\tx, x) }$ for all $x,x' \in \R^d$, and ${ a(x, \tx ) =
	1 \wedge (\pi(\tx ) / \pi(x)) }$.

We construct an MH chain $(X_t)$ as follows. First we initialize the chain with a draw $X_0$ from an
arbitrary distribution $\pi_0$ on $(\R^d, \calB)$. At each iteration $t$ we draw $\tx \sim Q(x,
\cdot)$, where $x = X_t$ is the current state of the chain. We then draw an acceptance indicator
$b_x \sim \Bern(a(x,x'))$ and set $X_{t+1} := \trx$. It is often convenient to realize the
acceptance indicator draw by taking $U \sim \Unif$ and $b_x := 1(U \leq a(x,x'))$. The chain $(X_t)$
defined above will have a transition kernel $P$ defined by $P(x,A) := \P(X_{t+1} \in A \g X_t = x)$ for
$x \in \R^d$ and $A \in \calB$.

For any probability measures $\mu$ and $\nu$ on $(\R^d, \calB)$, we say that a probability measure
$\gamma$ on $(\R^d \times \R^d, \calB \otimes \calB)$ is a coupling of $\mu$ and $\nu$ if $\gamma(A
\times \R^d) = \mu(A)$ and $\gamma(\R^d \times A) = \nu(A)$ for all $A \in \calB$. We write
$\Gamma(\mu,\nu)$ for the set of all such couplings of $\mu$ and $\nu$. Next suppose that $(X_t)$
and $(Y_t)$ are both Markov chains defined on the same probability space and that both evolve
according to the RWM transition kernel $P$ defined above. We say that $(X_t, Y_t)$ follows a
transition kernel coupling $\bp$ based on $P$ if there exists a joint kernel $\bar{P} : (\R^d \times
\R^d) \times (\calB \otimes \calB) \to [0,1]$ with $\bp(\xy, \cd) \in \Gamma(P(x,\cd),P(y,\cd))$ for
all $x,y \in \R^d$. We write $\Gamma(P,P)$ for the set of all such kernel couplings. The limitation
to couplings of $(X_t)$ and $(Y_t)$ that can be expressed in the form above is not a trivial one, as
described further in \citet{Kumar2001}.

We write $\tau = \min(t : X_t = Y_t)$ for the first time the chains meet. Couplings $\bp$ with the
property that $\P(\tau < \infty) = 1$ are called successful. To obtain successful couplings, we
generally need a proposal kernel coupling with $\P(x' = y' \g x,y) > 0$ from at least some state
pairs $(X_t,Y_t) = (x,y)$. This will lead us to consider maximal couplings of the proposal
distributions, which achieve the highest possible probability $\P(x'=y' \g x,y)$ for each $x,y \in \R^d$.
Couplings $\bar{P}$ with the property that $X_t = Y_t$ for all $t \geq \tau$ are called sticky and
are also our subject of interest here. \citet{rosenthal1997faithful} and \citet{dey2017note} point
out that stickiness is a non-trivial property, even for Markovian couplings. However, the couplings we consider can always be made sticky by requiring $x'=y' \sim Q(x,\cd) = Q(y,\cd)$ and $V=U \sim \Unif$ if $x=y$.

In this paper we consider a range of coupling options for the RWM transition kernel. Our goal is
to understand the implications of these options for the distribution of $\tau$ and especially for
the value of its mean $\E[\tau]$. The average meeting time serves as a convenient summary of the
meeting rate and plays a specific role in the efficiency of the estimators described in
\citet{Jacob2020}. We will focus on transition kernel couplings that arise by separately coupling
the proposals $(x',y')$ and the uniform draws $(U,V)$ underlying the acceptance indicators
$\bxy = (1(U \leq a(x,x')), 1(V \leq a(y,y')))$. This strategy is fully general except with respect
to the acceptance indicator coupling, as noted in \citet{o2021couplings}.

When it is unlikely to cause confusion, we write $(x,y) = (X_t, Y_t)$ for the current state of a
pair of coupled MH chains, $(x', y') = (x + \xi, y + \eta)$ for the proposals, and $(X, Y) =
(X_{t+1},Y_{t+1})$ for the next state pair. We write $r = ||x-y||$ for the Euclidean distance
between $x$ and $y$ and $m = (x+y)/2$ for their midpoint. We write $e = (y- x)/r$ for the unit
vector pointing from $x$ to $y$, an important direction in many of the constructions described below.
Finally, for any $z \in \R^d$ we write $z_1 = e'z \in \R$ for the $e$ component of $z$ and $z\n =
(I_d - ee') z \in \R^d$ for the projection of $z$ onto the subspace orthogonal to $e$. Thus we can
express any vector $z \in \R^d$ as $z = e z_1 + z\n$. See Figure~\ref{fig:cpl:geom} for an
illustration of these quantities.

\begin{figure}[t]
	\centering
	\includegraphics[trim={1cm .5cm 2.75cm 2.5cm}, clip=true,
	width=0.55\linewidth]{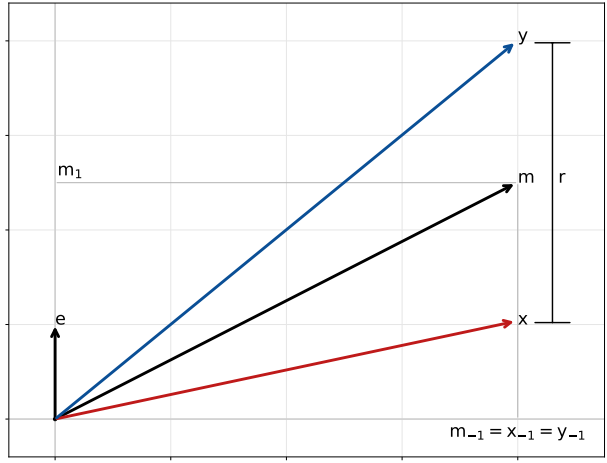}
	\caption{Coupled chain notation and geometry.
		We denote current points by $x, y \in \R^d$, their separation by $r \geq 0$, and their midpoint by $m \in \R^d$.
		The unit vector $e$ points in the direction from $x$ to $y$, $m_1 \in \R$ gives the $e$ component of $m$, and $m\n = x\n = y\n
		\in \R^d$ gives the component of these three vectors which is orthogonal to $e$.
		\label{fig:cpl:geom}}
\end{figure}

\section{Maximal coupling foundations}

To obtain finite meeting times we generally need $\P(x'=y' \g x,y) > 0$ from at least
some state pairs $\xy$ with $x \neq y$. One solution is to draw $(\tx , \ty )$ from a maximal
coupling $\bq \in \Gamma(Q,Q)$. A coupling of ${\tx \sim Q(x,\cdot)}$ and ${\ty \sim Q(y, \cdot)}$
is said to be maximal if it achieves the upper bound given by the coupling inequality, $\P(\tx = \ty \g x,y) \leq
1-||Q(x, \cdot) - Q(y, \cdot)||_\mathrm{TV} = 1 - \sup_{A \in \calB} |Q(x,A) - Q(y,A)|$. See
\citet[chap. 1.4]{thorisson2000coupling} or \citet[chap 4.]{Levin2017} for discussion of this bound
and its applications. For any probability distributions $\mu$ and $\nu$ on $(\R^d, \calB)$,
we write $\Gmax(\mu, \nu)$ for the set of all maximal couplings of $\mu$ and $\nu$.
Maximal couplings of the proposal distribution make an appealing starting
point, but note that their use is neither necessary nor sufficient to maximize $\P(X=Y \g x,y)$.
\citet{Lee2020} also observe that the
variance of the computational cost to draw from a maximal coupling can blow up when $r = \lVert y-x
\rVert \to 0$. In such cases one may prefer to use a slightly non-maximal coupling over a maximal
one.

The following result, closely related to \citet{douc2018markov}, Theorem 19.1.6 and Proposition D.2.8,
shows that maximality comes with significant constraints on a coupling's behavior:

\begin{lemma}
	\label{lem:maxpdf}
	Let $\bq(\xy,\cd)$ be a maximal coupling of $Q(x,\cd)$ and $Q(y,\cd)$,
	distributions with densities $q(x,\cd)$ and $q(y,\cd)$ on $\R^d$.
	If $\xyp \sim \bq(\xy,\cd)$, then for all $A \in \calB$,
	\eq{
		\P( x' \in A, x' = y' \g x,y) = \P( y' & \in A, x' = y' \g x,y ) = \int_A q(x, z) \wedge q(y, z) \diff z \\
		\P( x' \in A, x' \neq y' \g x,y ) &= \int_A 0 \vee ( q(x, z) - q(y, z) ) \diff z \\
		\P( y' \in A, x' \neq y' \g x,y) &= \int_A 0 \vee ( q(y, z) - q(x, z) ) \diff z.
	}
\end{lemma}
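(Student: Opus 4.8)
The plan is to show that maximality forces the ``diagonal part'' of the joint law to coincide exactly with the overlap measure $\nu_\wedge(A) := \int_A q(x,z) \wedge q(y,z) \diff z$, after which the two residual identities drop out by subtracting from the marginals $Q(x,\cd)$ and $Q(y,\cd)$. Throughout I condition on $x,y$ and suppress this from the notation.

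First I would record the easy observation that on the event $\{x' = y'\}$ the conditions $x' \in A$ and $y' \in A$ are identical, so $\P(x' \in A, x' = y' \g x,y) = \P(y' \in A, x' = y' \g x,y)$; this already establishes the first equality asserted in the statement, and it remains only to identify the common value. Write $\rho(A) := \P(x' \in A, x' = y' \g x,y)$, a measure on $(\R^d, \calB)$ of total mass at most one.

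The key step is to show $\rho(A) \leq \nu_\wedge(A)$ for every $A \in \calB$. Since $\rho(A) \leq \P(x' \in A \g x,y) = Q(x,A)$ and, by the observation above, $\rho(A) \leq \P(y' \in A \g x,y) = Q(y,A)$, the measure $\rho$ is dominated by both $Q(x,\cd)$ and $Q(y,\cd)$. Because these are absolutely continuous with respect to Lebesgue measure, so is $\rho$; writing $h$ for its density, domination yields $h \leq q(x,\cd)$ and $h \leq q(y,\cd)$ almost everywhere, hence $h \leq q(x,\cd) \wedge q(y,\cd)$ and $\rho(A) = \int_A h \leq \nu_\wedge(A)$. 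This ``infimum of two measures'' fact is the main technical point, though it is routine once the Radon--Nikodym density argument is set up.

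It then remains to upgrade $\rho \leq \nu_\wedge$ to an equality, and here I would invoke maximality. By hypothesis $\rho(\R^d) = \P(x' = y' \g x,y) = 1 - \lVert Q(x,\cd) - Q(y,\cd) \rVert_\TV = \int q(x,z) \wedge q(y,z) \diff z = \nu_\wedge(\R^d)$, so the two measures have equal total mass. Combining this with $\rho(A) \leq \nu_\wedge(A)$ and $\rho(A^c) \leq \nu_\wedge(A^c)$ and adding forces both inequalities to be equalities, giving $\rho(A) = \nu_\wedge(A)$ for all $A$, which is the first displayed identity. The residual identities then follow purely by bookkeeping: $\P(x' \in A, x' \neq y' \g x,y) = Q(x,A) - \rho(A) = \int_A \big( q(x,z) - q(x,z) \wedge q(y,z) \big) \diff z = \int_A 0 \vee ( q(x,z) - q(y,z) ) \diff z$, using $t - (t \wedge s) = 0 \vee (t - s)$, and symmetrically for the $y'$ statement. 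The only genuine obstacle is the measure-domination step; everything after maximality is arithmetic.
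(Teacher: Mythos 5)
Your proof is correct. Note that the paper does not actually prove this lemma: it states it as a close relative of Theorem 19.1.6 and Proposition D.2.8 of \citet{douc2018markov}, so there is no internal argument to compare against. Your route is the standard self-contained one behind those references: identify the ``diagonal'' measure $\rho(A) = \P(x' \in A, x'=y' \g x,y)$, show it is dominated setwise by both marginals, pass to Radon--Nikodym densities to get $\rho \leq \nu_\wedge$ (this is exactly the measure-theoretic content of the coupling inequality), and then use maximality of the total mass together with $\rho(A) \leq \nu_\wedge(A)$ and $\rho(A^c) \leq \nu_\wedge(A^c)$ to force equality on every set; the residual identities are then arithmetic via $t - (t \wedge s) = 0 \vee (t-s)$. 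The only ingredient you use implicitly is the overlap form of total variation, $1 - \lVert Q(x,\cd) - Q(y,\cd) \rVert_\TV = \int q(x,z) \wedge q(y,z) \diff z$, which with the paper's definition $\lVert \mu - \nu \rVert_\TV = \sup_{A} |\mu(A) - \nu(A)|$ follows in one line by observing that the supremum is attained at $A = \{z : q(x,z) > q(y,z)\}$; it would be worth stating this, since it is precisely what makes your ``equal total mass'' step valid, but it is standard and your argument is complete once it is in place.
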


We can obtain $\P(x' = y' \g x,y)$ by evaluating the first equation at $A = \R^d$. This meeting
probability takes a particularly simple form for multivariate normal distributions, as we see in the
following extension of \citet[chap. 3.3]{Pollard2005}:

\begin{lemma}
	\label{lem:meetprob}
	If $\xyp$ follows any maximal coupling of
	$\N(x, I_d \sigma^2_d)$ and $\N(y, I_d \sigma^2_d)$, then
	\eq{
		\P(x' = y' \g x,y) = \P\Big(\chi^2_1 \geq \tfrac{||y-x||^2}{4 \sigma^2_d} \Big).
	}
\end{lemma}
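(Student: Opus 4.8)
The plan is to begin from Lemma~\ref{lem:maxpdf}. Evaluating its first identity at $A = \R^d$ shows that every maximal coupling assigns the same meeting probability, namely the overlap mass
\eq{
	\P(x' = y' \g x,y) = \int_{\R^d} q(x,z) \wedge q(y,z) \diff z,
}
so it suffices to evaluate this integral for the two normal densities $q(x,\cd) = \N(\cd\,; x, I_d \sigma_d^2)$ and $q(y,\cd) = \N(\cd\,; y, I_d \sigma_d^2)$. This already removes any dependence on the particular maximal coupling and reduces the claim to a deterministic computation.

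The key step is to exploit the isotropy of the proposal covariance to collapse the $d$-dimensional integral to a one-dimensional one. Using the decomposition $z = e z_1 + z\n$ from Section~\ref{sec:defn}, the squared distance factorizes as $\|z - x\|^2 = (z_1 - x_1)^2 + \|z\n - x\n\|^2$, and likewise for $y$. Because the common covariance $I_d \sigma_d^2$ is a scalar multiple of the identity, each density splits into a product of a one-dimensional factor along $e$ and a $(d-1)$-dimensional factor orthogonal to $e$. The crucial observation is that $x\n = y\n = m\n$, so the two densities share an identical orthogonal factor and differ only in their $e$-components, where the means are $x_1$ and $y_1 = x_1 + r$ with $r = \|y - x\|$. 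The orthogonal factor therefore passes through the minimum and integrates to one, leaving
\eq{
	\P(x' = y' \g x,y) = \int_\R \N(t\,; x_1, \sigma_d^2) \wedge \N(t\,; y_1, \sigma_d^2) \diff t.
}
I expect this reduction to be the main obstacle, not because it is deep but because it requires carefully justifying the factorization and the cancellation of the orthogonal coordinates; an alternative, perhaps cleaner, route is to note that the overlap mass is invariant under the rigid motion carrying $e$ to the first coordinate axis and $m$ to the origin, which reduces directly to the same one-dimensional problem.

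Finally, I would compute the univariate overlap by hand. Since the two densities have equal variance they cross exactly at the midpoint $t^\star = (x_1 + y_1)/2$, and by symmetry the overlap equals twice the tail of either Gaussian beyond $t^\star$:
\eq{
	\int_\R \N(t\,; x_1, \sigma_d^2) \wedge \N(t\,; y_1, \sigma_d^2) \diff t = 2\,\P\!\Big( Z \geq \tfrac{r}{2\sigma_d} \Big),
}
where $Z \sim \N(0,1)$. It then remains only to recognize this two-sided tail as a chi-squared probability: writing $\chi^2_1 = Z^2$ gives $2\,\P(Z \geq r/(2\sigma_d)) = \P(|Z| \geq r/(2\sigma_d)) = \P(\chi^2_1 \geq r^2/(4\sigma_d^2))$, which is exactly $\P(\chi^2_1 \geq \|y-x\|^2/(4\sigma_d^2))$ and completes the proof. (The degenerate case $r = 0$ is consistent, since both sides then equal one.)
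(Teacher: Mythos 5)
Your proof is correct and follows essentially the same route as the paper: both invoke Lemma~\ref{lem:maxpdf} at $A=\R^d$ to reduce to the overlap integral, factorize the isotropic Gaussian into the $e$-component and the (common) orthogonal component, and evaluate the resulting one-dimensional overlap as twice a Gaussian tail, identified with the $\chi^2_1$ probability. The only cosmetic difference is that the paper centers the $e$-axis at the midpoint (means $\pm r/2$) while you keep means $x_1$ and $x_1+r$, which changes nothing substantive.
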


\begin{proof}
	Recall that we write $\N(z ; \mu, \Sigma)$ for the density of $\N(\mu, \Sigma)$ and have
	defined $m = (y+x)/2$, $r = \lVert y - x \rVert$, $e = (y-x)/r, z_1 = e' z$ and $z\n = (I_d -
	ee')z$ for all $z \in \R^d$.
	In general $ \N(z; \mu, I_d \sigma^2) = \N(z_1 ; \mu_1, \sigma_d^2)
	\N(z\n; \mu\n, I_{d-1} \sigma_d^2)$.
	We can also decompose $x$ and $y$ into $e$ and $e^\perp$ parts
	according to $x = m - e \tfrac{r}{2} = (m_1 - \tfrac{r}{2}) e + m\n$ and $y = m + e \tfrac{r}{2} =
	(m_1 + \tfrac{r}{2}) e + m\n$. Combining these expressions with Lemma~\ref{lem:maxpdf} yields
	the desired conclusion:
	\eq{
		\P(\tx = \ty \g x,y) & = \int \N(z; x, I_d \sigma^2_d) \wedge \N(z; y, I_d \sigma_d^2) \diff z \\ &
		= \iint \left( \N(z_1; -\tfrac{r}{2}, \sigma_d^2) \wedge \N(z_1; \tfrac{r}{2}, \sigma_d^2) \right)
		\N(z\n; m\n, I_{d-1} \sigma_d^2)  \diff z_1 \diff z\n \\ & = \int_{-\infty}^\infty \N(z_1;
		-\tfrac{r}{2}, \sigma_d^2) \wedge \N(z_1; \tfrac{r}{2}, \sigma_d^2) \diff z_1 = 2 \int_{0}^\infty
		\N(z_1; -\tfrac{r}{2}, \sigma_d^2) \diff z_1 \\ & = 2 \P( \N(0,1) \geq \tfrac{r}{2 \sigma_d}) =
		\P(\chi^2_1 \geq \tfrac{r^2}{4 \sigma_d^2}).
	}\\[-5.6em]
\end{proof}

An important implication of Lemma \ref{lem:meetprob} is that as we increase the dimension $d$, the
separation $r = ||y-x||$ needed to hold $\P(x'=y' \g x,y)$ constant must vary in proportion to
$\sigma^2_d$. Under the typical RWM assumption that $\sigma^2_d = \ell^2/d$, this means $r$ must
shrink at a rate $1/\sqrt{d}$ to maintain a constant probability of meeting proposals. This inverse
square-root condition plays a crucial role in determining the dimension scaling behavior of
different couplings as we will observe in the simulations of Section~\ref{sec:rwmsims}. Note that the meeting probability derived in Lemma \ref{lem:meetprob} also
admits the following useful inequalities:

\begin{figure}
	\centering
	\includegraphics[trim={1cm .5cm 2.75cm 2.5cm}, clip=true,
	width=0.5\linewidth]{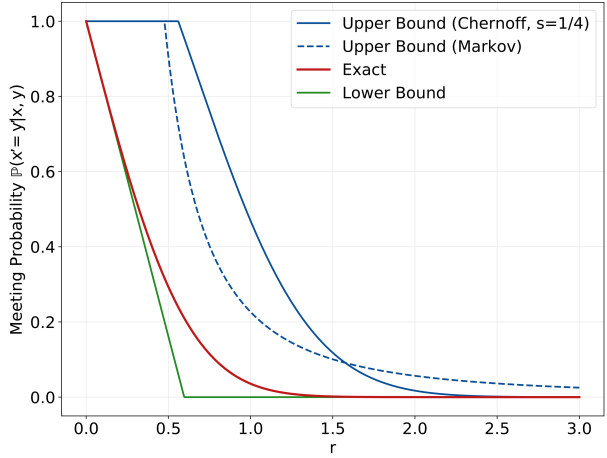}
	\caption{
		Meeting probability and bounds for maximal couplings of normal distributions on $\R$.
		The red line (`Exact') gives $\P( x'=y' \g x,y) $ when $\xyp$ follows a maximal coupling of $\N(x,1)$ and $\N(y,1)$, based on the results on Lemma~\ref{lem:meetprob}. The blue and green lines give the bounds derived in Lemma~\ref{lem:mpbounds}. Here $r = \lVert y - x \rVert$. The exact meeting probabilities involve the complementary CDF of the $\chi^2_1$ distribution, so it can be analytically more convenient to use the given bounds.
		\label{fig:cpl:propmeet}}
\end{figure}

\begin{lemma} \label{lem:mpbounds} Under any maximal coupling of $x' \sim \N(x, I_d
	\sigma^2_d)$ and $y' \sim \N(y, I_d \sigma^2_d)$, we have
	\eq{
		1 - \sqrt{\tfrac{2}{\pi}} \tfrac{||y-x||}{2 \sigma_d}
		\leq \P(x' = y' \g x,y)
		\leq \tfrac{ 4 \sigma^2_d}{ \lVert y- x \rVert^2}
		\qaq
		\P( x'=y' \g x,y)
		\leq \tfrac{1}{\sqrt{1-2s}} \exp \Big( -s \tfrac{||y-x||^2}{4 \sigma^2_d} \Big)
	}
	for $s \in (0,1/2)$.
\end{lemma}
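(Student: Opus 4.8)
The plan is to reduce each of the three inequalities to a tail estimate for the $\chi^2_1$ distribution. By Lemma~\ref{lem:meetprob} the meeting probability equals $\P(\chi^2_1 \geq \tfrac{r^2}{4\sigma_d^2})$ with $r = \lVert y - x \rVert$, so writing $a := \tfrac{r}{2\sigma_d}$ the quantity of interest is $\P(\chi^2_1 \geq a^2)$. I would use the representation $\chi^2_1 = Z^2$ for $Z \sim \N(0,1)$, giving $\P(\chi^2_1 \geq a^2) = \P(|Z| \geq a) = 1 - \int_{-a}^a \N(z; 0,1) \diff z$, and then handle the lower bound and the two upper bounds separately by classical arguments.

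For the lower bound I would bound the standard normal density inside the integral by its peak value $\tfrac{1}{\sqrt{2\pi}}$, so that $\int_{-a}^a \N(z;0,1)\diff z \leq \tfrac{2a}{\sqrt{2\pi}} = \sqrt{\tfrac{2}{\pi}}\,a$; substituting $a = \tfrac{r}{2\sigma_d}$ and rearranging yields $\P(x'=y' \g x,y) \geq 1 - \sqrt{\tfrac{2}{\pi}}\tfrac{r}{2\sigma_d}$. For the first upper bound I would apply Markov's inequality directly to $\chi^2_1$: since $\E[\chi^2_1] = 1$, we obtain $\P(\chi^2_1 \geq a^2) \leq a^{-2} = \tfrac{4\sigma_d^2}{r^2}$. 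For the second upper bound I would use a Chernoff argument, applying Markov's inequality to $e^{s\chi^2_1}$ to get $\P(\chi^2_1 \geq t) \leq e^{-st}\,\E[e^{s\chi^2_1}]$ for $s > 0$; inserting the $\chi^2_1$ moment generating function $\E[e^{s\chi^2_1}] = (1-2s)^{-1/2}$, which is finite precisely for $s < 1/2$, and taking $t = \tfrac{r^2}{4\sigma_d^2}$ gives the stated exponential bound on $s \in (0,1/2)$.

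None of the steps presents a real obstacle: each bound is a one-line application of a classical inequality once the meeting probability is expressed as a $\chi^2_1$ tail. The only points needing minor care are recording the admissible range $s \in (0,1/2)$ on which the moment generating function converges, and noting that the lower bound becomes vacuous (negative) once $r$ is large, so it is informative only in the small-$r$ regime relevant to the dimension-scaling discussion following Lemma~\ref{lem:meetprob}.
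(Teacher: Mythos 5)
Your proof is correct and follows essentially the same route as the paper's: all three bounds reduce, via Lemma~\ref{lem:meetprob}, to $\chi^2_1$ tail estimates, with the lower bound obtained by bounding the standard normal density by its peak value $1/\sqrt{2\pi}$ (your $\P(|Z|\geq a)$ formulation is just a rephrasing of the paper's $2(1-\Phi(a))$), the first upper bound by Markov's inequality, and the second by the Chernoff bound with the $\chi^2_1$ moment generating function. No gaps.
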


\begin{proof} For the lower bound, let $\phi(z) = \N(z ; 0,1)$ be the standard
	normal density and let $\Phi(z)$ be the corresponding cumulative distribution function.
	Since $\Phi(0) = 1/2$ and $\phi(z) \leq
	1/\sqrt{2\pi}$ for all $z$, then for $a>0$ we may write $\Phi(a) = \int_{-\infty}^{a} \phi(z) \diff
	z \leq \tfrac{1}{2} + \tfrac{a}{\sqrt{2 \pi}}$. This expression rearranges to $1 -
	\sqrt{\tfrac{2}{\pi}} a \leq 2 (1 - \Phi(a))$, and plugging in $a = r / (2 \sigma_d)$ yields
	the desired lower bound.
	The first upper bound follows directly from Markov's inequality: \eq{ \P(x' = y' \g
		x,y) = \P(\chi_1^2 \geq \tfrac{r^2}{4 \sigma^2_d}) \leq \frac{\E[\chi_1^2]}{r^2 / (4 \sigma^2_d)} =
		\frac{4 \sigma^2_d}{r^2}. } The second upper bound is due to Chernoff's inequality, $\P(\chi^2_1
	\geq a) \leq e^{-sa} \E[e^{s \chi^2_1}]$ for all $s > 0$. We have $\E[e^{s \chi^2}] =
	1/\sqrt{1-2s}$ for $s < 1/2$, so plugging in $a = r^2 / (4 \sigma^2_d)$ yields the desired
	expression.
\end{proof}

In Figure \ref{fig:cpl:propmeet} we plot the value of $\P(\tx = \ty \g x,y)$ as derived in
Lemma \ref{lem:meetprob} along with the upper and lower bounds from Lemma \ref{lem:mpbounds}. We
observe that while the lower bound is tight at $r=0$ and in the limit as $r \to \infty$, the upper
bounds only become tight in the large-$r$ limit. When needed, sharper upper and lower bounds can be obtained from more precise Gaussian tail inequalities, see e.g.  \citet[chap. 7]{Abramowitz1988} and
\citet{Duembgen2010}.

We close by noting that if we can produce draws from one maximal coupling, we can often
transform these into draws from a maximal coupling of a related pair of distributions.
Recall that for any measure $\mu$ on $(\R^d,
\calB)$ and measurable function $f: \R^d \to \R^d$, the pushforward measure $f_\star
\mu$ on $(\R^d, \calB)$ is defined by $f_\star \mu (A) := \mu(f^{-1}(A))$ for all $A \in \calB$.
Also, if $x' \sim \mu$ then $f(x') \sim f_\star \mu$. Thus we have the following:

\begin{lemma}
	\label{lem:transf}
	Suppose $\mu$ and $\nu$ are probability measures on $(\R^d, \calB)$ and
	let $f: \R^d \to \R^d$ be a homeomorphism. $\xyp$ follows a maximal coupling of $\mu$ and $\nu$
	if and only if $(f(x'), f(y'))$ follows a maximal coupling of $f_\star \mu$ and $f_\star \nu$.
\end{lemma}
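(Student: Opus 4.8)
The plan is to unwind the definition of "maximal coupling" on both sides of the biconditional and show that the defining inequality is preserved exactly under the homeomorphism. Recall that a coupling $\gamma$ of $\mu$ and $\nu$ is maximal precisely when $\gamma(\{(u,v): u = v\}) = 1 - \lVert \mu - \nu \rVert_{\TV}$, since this is the extreme case of the coupling inequality. So the statement reduces to two facts: (i) that $(f(x'), f(y'))$ is a coupling of $f_\star \mu$ and $f_\star \nu$ whenever $\xyp$ is a coupling of $\mu$ and $\nu$, and (ii) that the total variation distance is invariant under pushforward by a homeomorphism, i.e.\ $\lVert f_\star \mu - f_\star \nu \rVert_{\TV} = \lVert \mu - \nu \rVert_{\TV}$.

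For the marginal-preservation claim (i), let $\gamma = \mathcal{L}(\xyp)$ and let $g = f \times f$ be the product map on $\R^d \times \R^d$. Then $\mathcal{L}(f(x'), f(y')) = g_\star \gamma$, and for any $A \in \calB$ I would compute $g_\star \gamma(A \times \R^d) = \gamma(g^{-1}(A \times \R^d)) = \gamma(f^{-1}(A) \times \R^d) = \mu(f^{-1}(A)) = f_\star \mu(A)$, using that $f$ is a bijection so $g^{-1}(A \times \R^d) = f^{-1}(A) \times \R^d$. The second marginal is identical. This direction uses only that $f$ is measurable with measurable inverse, not continuity.

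For the total-variation invariance (ii), the cleanest route is the supremum characterization stated earlier in the excerpt, $\lVert \mu - \nu \rVert_{\TV} = \sup_{A \in \calB} |\mu(A) - \nu(A)|$. Because $f$ is a homeomorphism it is a Borel isomorphism: the map $A \mapsto f^{-1}(A)$ is a bijection of $\calB$ onto itself (here continuity of both $f$ and $f^{-1}$ guarantees $f^{-1}(A)$ and $f(A)$ are Borel). Hence
\eq{
	\lVert f_\star \mu - f_\star \nu \rVert_{\TV} = \sup_{A \in \calB} |f_\star \mu(A) - f_\star \nu(A)| = \sup_{A \in \calB} |\mu(f^{-1}(A)) - \nu(f^{-1}(A))| = \sup_{B \in \calB} |\mu(B) - \nu(B)|,
}
where the last equality reindexes over $B = f^{-1}(A)$, valid because $A \mapsto f^{-1}(A)$ ranges over all of $\calB$. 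The final supremum is exactly $\lVert \mu - \nu \rVert_{\TV}$.

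Combining these, if $\xyp$ is maximal then $\P(f(x') = f(y')) = \P(x' = y') = 1 - \lVert \mu - \nu \rVert_{\TV} = 1 - \lVert f_\star\mu - f_\star\nu \rVert_{\TV}$, where the first equality uses injectivity of $f$ (so $x' = y' \iff f(x') = f(y')$); thus $(f(x'), f(y'))$ is maximal. The converse follows by applying the same argument to $f^{-1}$, which is again a homeomorphism. The one point that genuinely requires the homeomorphism hypothesis rather than mere measurability is step (ii): I expect the \emph{main subtlety} to be justifying that $A \mapsto f^{-1}(A)$ is a self-bijection of the Borel $\sigma$-algebra, which needs $f(A) \in \calB$ for Borel $A$ and hence continuity of $f^{-1}$. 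Everything else is bookkeeping with pushforwards and the injectivity of $f$ on the diagonal event.
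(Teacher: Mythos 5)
Your proof is correct and takes essentially the same route as the paper's: both establish that $f$ being a bijection preserves the coupling property and the diagonal probability $\P(x'=y') = \P(f(x')=f(y'))$, and both prove total-variation invariance via the supremum characterization together with the fact that $A \mapsto f^{-1}(A)$ maps $\calB$ onto $\calB$ because $f$ is a homeomorphism. Your write-up is merely more explicit about the marginal computation and handles the converse by applying the argument to $f^{-1}$ rather than phrasing each step as a biconditional, but these are presentational differences, not a different argument.
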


\begin{proof}
	First, $(x',y') \in \Gamma(\mu, \nu)$ if and only if $(f(x'), f(y')) \in \Gamma(f_\star \mu, f_\star \nu)$ since $f$ is a bijection.
	Also
	\eq{
		\lVert f_\star \mu - f_\star \nu \rVert_\TV
		= \sup_{A \in \calB} |\mu(f^{-1}(A)) - \nu(f^{-1}(A))|
		= \sup_{B \in \calB'} |\mu(B) - \nu(B)|
		= \lVert \mu - \nu \rVert_\TV.
	}
	Here $\calB' = \{ f^{-1}(A) : A \in \calB \}$, and $\calB' = \calB$ since $f$ is a homeomorphism.
	Since $f$ is a bijection, we also have  $\P(x' = y') = \P(f(x') = f(y'))$.
	Thus $(x',y')$ will achieve the coupling inequality bound exactly when $(f(x'),f(y'))$ does. Thus we have
	shown that the former pair follows a maximal coupling of $\mu$ and $\nu$ if and only if the latter follows a maximal coupling
	of $f_\star \mu$ and $f_\star \nu$.
\end{proof}

Lemma~\ref{lem:transf} allows us to efficiently draw from and analyze the maximal independent
coupling of distributions like $\N(x,\Sigma)$ and $\N(y,\Sigma)$ in terms of the maximal independent
coupling of $\N(0, I_d)$ and $\N(\Sigma^{-1/2}(y-x), I_d)$. It can also be useful in the design of
couplings when the proposal kernel arises from a deterministic but well-behaved function of a
multivariate normal random variable, as in the case of Hamiltonian Monte Carlo
\citep{Duane:1987,Neal1993,neal2011mcmc}.

%%%

\section{Proposal step couplings}
\label{sec:props}

In this section we describe a range of proposal kernel couplings $\bq$ based on the RWM proposal
kernel $Q(z,\cd) = \N(z, I_d \sigma^2_d)$ on $\R^d$. If $\xyp = (x + \xi, y + \eta) \sim \bq(\xy,
\cd)$, then marginally $\xi, \eta \sim \N(0, I_d \sigma^2_d)$. These increments can exhibit
a complex dependence pattern, and $(\xi,\eta)$ need not be multivariate normal. The simplest
option, however, is the independent coupling ${\xi, \eta \iidsim \N(0, I_d \sigma^2_d)}$. One step more
complex is the synchronous or `common random numbers' coupling ${\xi = \eta \sim \N(0, I_d
	\sigma^2_d)}$. As noted in \citet{givens1984class} and \citet{knott1984optimal}, the synchronous
coupling minimizes the expected squared distance $\E[\lVert x' - y' \rVert^2 ]$ among all joint
distributions with $x' \sim \N(x,I_d \sigma^2_d)$ and $y' \sim \N(y,I_d \sigma^2_d)$. We comment
further on optimal transport couplings below.

Another slightly more complex option is the simple reflection coupling, in which $\xi \sim \N(0, I_d
\sigma^2_d)$ and ${\eta = (I_d - 2 e e') \xi}$. With the notation $z_1 = e' z$ and $z\n = (I_d -
ee')z$ for any $z \in \R^d$, we note that the reflection coupling yields $\eta_1 =-\xo$ and $\eta\n
= \xn$. Thus $\eta$ is the reflection of $\xi$ over the hyperplane $\mathcal{H} = \{z : ||z - x|| =
||z-y||\} = \{z : z_1 = m_1\}$. Taking this geometric logic a step further, we can also consider
the full-reflection coupling in which $\xi \sim \N(0, I_d \sigma^2_d)$ and $\eta = - \xi$. This
coupling maximizes $\E[\lVert y' - x' \rVert^2 ]$ just as the synchronous coupling minimizes it. The
independent, synchronous, reflection, and full-reflection couplings are easy to draw from and
straightforward to analyze. They also differ dramatically in the covariance and transport properties
that they establish between $x'$ and $y'$, their interactions with various accept/reject procedures,
and thus the degree of contraction they produce between
coupled chains. However each of these couplings has the property that if $x \neq y$ then $x' \neq y'$
almost surely. This implies $X \neq Y$, so exclusive reliance on these couplings cannot yield
$\P(\tau < \infty) = 1$ unless $X_0 = Y_0$.

\subsection{The maximal independent coupling}

Suppose $\xyp \sim \bq(\xy,\cd)$ for some $\bq \in \Gmax(Q,Q)$. One consequence of
Lemma~\ref{lem:maxpdf} is that all maximal couplings exhibit the same distribution of $x'$ and
$y'$ given $x'=y'$. In particular, each of these variables will have conditional density $q^m_{xy}(z):= q(x,z)
\wedge q(y,z) / \int q(x,w) \wedge q(y,w) \diff w$. We refer to the
distributions of $x'$ and $y'$ given $x' \neq y'$ as the residuals of $\bq(\xy, \cd)$. In light of
the above, we differentiate between various maximal couplings according to the behavior of these
residuals, i.e. according to the distribution of $\xyp$ conditional on $x' \neq y'$.

The first and perhaps most famous maximal coupling
was introduced by \citet{Vaserstein1969} and termed the
$\gamma$-coupling by \citet{lindvall2002lectures}.
It is the unique maximal coupling with the property that $x'$ and $y'$ are independent when $x' \neq y'$.
Thus we call this the maximal coupling with independent residuals, or simply the maximal independent coupling.
When $Q(z,\cd) = \N(z, I_d \sigma^2_d)$ for $z \in \R^d$, Lemmas~\ref{lem:maxpdf} and \ref{lem:meetprob}
imply that this coupling approximates the independent coupling of $Q(x,\cd)$ and $Q(y,\cd)$
as $r=||y-x|| \to \infty$.

\begin{algorithm}
	\caption{Draw from the maximal independent coupling of $Q(x,\cd)$ and $Q(y,\cd)$ \label{alg:maxbasic}}
	\begin{enumerate}
		\item Draw $x' \sim Q(x,\cd)$ and $W_x \sim \Unif$
		\item If $W_x \, q(x,x') \leq q(y,x')$, set $y' = x'$
		\item Else:
		\begin{enumerate}
			\item Draw $\tilde y \sim Q(y,\cd)$ and $W_y \sim \Unif$
			\item If $W_y \, q(y, \tilde y) > q(x, \tilde y)$, set $y' = \tilde y$
			\item Else go to 3(a)
		\end{enumerate}
		\item Return $(x',y')$
	\end{enumerate}
\end{algorithm}

The references above prove that one can draw from the maximal independent coupling by using
the rejection sampling procedure described in Algorithm \ref{alg:maxbasic}. This method is simple and
versatile, although it suffers from a loss of efficiency as a function of dimension. In our setting,
each normal density evaluation requires $\bigO(d)$ computations, and these costs can be a factor in
algorithmic performance in high dimensions or when the number of iterations required to obtain a
valid $y' $ draw is large. Algorithm \ref{alg:maxquick} offers an alternative, which exploits the symmetries
and factorization properties of the multivariate normal distribution. It provides a more
efficient way to draw from the maximal coupling of these distributions, and it also lends itself to
extensions and variations as we consider below. Lemma \ref{lem:maxquick} establishes the validity of this
algorithm.

\begin{algorithm}
	\caption{Draw from the maximal independent coupling of $\N(x, I_d \sigma_d^2)$ and $\N(y, I_d \sigma_d^2)$. \label{alg:maxquick}}
	\begin{enumerate}
		\item Compute $e = (y-x)/\lVert y - x \rVert$, $m = (y+x)/2$, $x_1 = x'e$, and $y_1 = y'e$
		\item Draw $(x'_1, y'_1)$ from the maximal independent coupling of $\N(x_1, \sigma_d^2)$ and $\N(y_1, \sigma_d^2)$
		using Algorithm~\ref{alg:maxbasic}
		\item Independently draw $\tilde x \sim \N(x, I_d \sigma^2_d)$ and $\tilde y \sim \N(y, I_d \sigma^2_d)$
		\item Set $x'\n = (I_d - ee') \tilde x \sas y'\n = (I_d - e e') \tilde y$
		\item Set $x' = \tx_1 e + \tx\n$. If $\tx_1=\ty_1$ set $y'=\ty_1 e + \tx\n$, else set $y'= \ty_1 e + \ty\n$
		\item Return $(x',y')$
	\end{enumerate}
\end{algorithm}

\begin{lemma}
	\label{lem:maxquick}
	The output of Algorithm~\ref{alg:maxquick} is distributed according to the maximal independent coupling of $\N(x, I_d \sigma_d^2)$ and $\N(y, I_d \sigma_d^2)$.
\end{lemma}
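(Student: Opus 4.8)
The plan is to verify that the output $\xyp$ of Algorithm~\ref{alg:maxquick} has the three defining properties of the maximal independent coupling: correct marginals $x' \sim \N(x, I_d\sigma_d^2)$ and $y' \sim \N(y, I_d\sigma_d^2)$; maximality, i.e. $\P(x'=y' \g x,y)$ attains the coupling-inequality bound; and independent residuals, i.e. $x'$ and $y'$ are conditionally independent given $x' \neq y'$. The structural fact driving the whole argument is that $x$ and $y$ differ only along $e$, so their orthogonal projections coincide: $x\n = y\n = m\n$. Consequently, using the factorization $\N(z; \mu, I_d\sigma_d^2) = \N(z_1; \mu_1, \sigma_d^2)\,\N(z\n; \mu\n, I_{d-1}\sigma_d^2)$ recalled in the proof of Lemma~\ref{lem:meetprob}, both $Q(x,\cd)$ and $Q(y,\cd)$ share the common orthogonal factor $\N(\,\cd\,; m\n, I_{d-1}\sigma_d^2)$ and differ only in their one-dimensional $e$-marginals $\N(x_1, \sigma_d^2)$ and $\N(y_1, \sigma_d^2)$.

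First I would check the marginals. For $x'$, step~2 gives $x_1' \sim \N(x_1, \sigma_d^2)$ and step~4 gives $x'\n = (I_d - ee')\tx \sim \N(m\n, I_{d-1}\sigma_d^2)$, these two pieces being independent since $\tx$ is drawn independently of step~2; the factorization then yields $x' = x_1' e + x'\n \sim \N(x, I_d\sigma_d^2)$. The marginal of $y'$ is the delicate case, because step~5 assigns $y'$ the orthogonal component $x'\n$ when $x_1'=y_1'$ and the independent component $y'\n = (I_d-ee')\ty$ otherwise. I would handle this by conditioning on the one-dimensional pair $(x_1', y_1')$: in either branch the orthogonal component handed to $y'$ is drawn from $\N(m\n, I_{d-1}\sigma_d^2)$ independently of $(x_1',y_1')$, because both $\tx$ and $\ty$ are independent of step~2 and $x\n = y\n = m\n$. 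Since this conditional law does not depend on $(x_1', y_1')$, the orthogonal part of $y'$ is independent of $y_1' \sim \N(y_1, \sigma_d^2)$, and the factorization gives $y' \sim \N(y, I_d\sigma_d^2)$.

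Next I would establish maximality. The key observation is that $\{x' = y'\} = \{x_1' = y_1'\}$ exactly: when $x_1' = y_1'$ step~5 sets $y' = y_1' e + x'\n = x'$, and when $x_1' \neq y_1'$ the $e$-components differ so $x' \neq y'$. Hence $\P(x'=y' \g x,y) = \P(x_1' = y_1' \g x,y)$, and since step~2 draws a one-dimensional maximal coupling via Algorithm~\ref{alg:maxbasic}, this equals $1 - \lVert \N(x_1,\sigma_d^2) - \N(y_1, \sigma_d^2)\rVert_\TV$. The reduction already carried out in the proof of Lemma~\ref{lem:meetprob} shows that the overlap integral $\int \N(z;x,I_d\sigma_d^2)\wedge \N(z;y,I_d\sigma_d^2)\diff z$ equals its one-dimensional counterpart, so this one-dimensional overlap is exactly $1 - \lVert Q(x,\cd) - Q(y,\cd)\rVert_\TV$, confirming that the constructed coupling attains the bound and is therefore maximal.

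Finally I would verify independent residuals. Conditioning on $x' \neq y'$, equivalently on $x_1' \neq y_1'$, the defining property of the one-dimensional maximal independent coupling makes $x_1'$ and $y_1'$ independent, while $x'\n = (I_d-ee')\tx$ and $y'\n = (I_d-ee')\ty$ are independent of each other and of step~2 by construction. The conditional law of $(x_1', y_1', x'\n, y'\n)$ thus factors into four independent pieces, so the pairs $(x_1', x'\n)$ and $(y_1', y'\n)$ are independent conditional on $x_1' \neq y_1'$, which gives $x' \perp y'$ on this event. Together with the correct marginals and maximality, and using Lemma~\ref{lem:maxpdf} to pin down the common meeting distribution, this identifies the output as the maximal independent coupling. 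I expect the main obstacle to be the $y'$ marginal in the second paragraph: the step-5 switch means $y'$ is assembled from different orthogonal components on the two branches, and the argument hinges on recognizing that the shared projection $x\n = y\n = m\n$ makes these components identically distributed, so the switch leaves the marginal law of $y'$ intact.
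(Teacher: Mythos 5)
Your proof is correct and follows essentially the same route as the paper's: the same decomposition into the $e$-component and its orthogonal complement, the same three-part verification (marginals, maximality via Lemma~\ref{lem:meetprob}, independence of residuals), with the key structural fact $x\n = y\n = m\n$ resolving the branch-dependent assignment of the orthogonal component to $y'$. Your conditioning argument for the $y'$ marginal is if anything slightly more explicit than the paper's, which checks the conditional mean of $y'\n$ on the event $x_1' = y_1'$ and asserts independence; the mathematical content is the same.
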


\begin{proof}
	First we show that the output $\xyp$ of Algorithm~\ref{alg:maxquick} follows a coupling of
	$\N(x, I_d \sigma_d^2)$ and $\N(y, I_d \sigma_d^2)$. For $x'$ we have $x_1' \sim \N(e'x,\sigma^2_d)$
	and $x \n' \sim \N( (I_d - ee') x, (I_d - ee') \sigma^2_d)$ with independence between $x_1'$ and $x'\n$.
	Thus $x' = x'_1 e + x \n' \sim \N(x, I_d \sigma^2_d)$. For $y'$, note that
	$y'\n \sim \N(y\n, (I_d - e e') \sigma^2_d)$ whether or not $x'_1=y'_1$. This is trivial when $x'_1\neq y'_1$.
	When $x'_1 = y'_1$ we have
	$\E[y'\n \g x,y, x_1'=y_1'] = (I_d - e e') x = (I_d - e e') (m - \tfrac{r}{2}e)
	= (I_d - e e') m = (I_d - e e') (m + \tfrac{r}{2} e) = y \n$.
	Also, $y_1' \sim \N(y'e, \sigma^2_d)$ and $y'\n$ are independent, so we conclude $y' \sim \N(y, I_d \sigma^2_d)$.

	Next, we show that $\xyp$ follows a maximal coupling.
	By Lemma \ref{lem:meetprob}, draws from a maximal coupling of
	$\N(x, I_d \sigma^2_d)$ and $\N(y, I_d \sigma^2_d)$ must meet with probability
	$\P(\chi^2_1 \geq \tfrac{\lVert y - x \rVert^2}{ 4\sigma_d^2})$.
	By construction we have $x' = y'$ if and only if $x'_1 = y'_1$.
	Applying Lemma \ref{lem:meetprob} to the maximal coupling of
	$\N(x_1,\sigma_d^2)$ and $\N(y_1,\sigma_d^2)$ shows that meeting occurs with probability
	$\P(\chi^2_1 \geq \tfrac{ (y_1 - x_1)^2}{ 4\sigma_d^2} )$.
	We also have $y_1 - x_1 = e'(y-x) = (y - x)'(y-x)/\lVert y - x \rVert = \lVert y - x \rVert$, so meeting occurs at the maximal rate.

	Finally, we observe that $x'$ and $y'$ are independent conditional on $x' \neq y'$. This holds for $x_1$ and $y_1$ since these are drawn from a maximal independent coupling on $\R$, and it holds for $x\n$ and $y\n$ since in the relevant case these are defined using independent random variables.
	Thus Algorithm~\ref{alg:maxquick} produces draws from the maximal independent coupling of
	$\N(x,I_d \sigma^2_d)$ and $\N(y,I_d \sigma^2_d)$.
\end{proof}

Overall, the meeting time associated with a transition kernel coupling depends on that coupling's probability of producing a meeting at each step
together with the dynamics of the chains conditional on not meeting.
It is often a good idea to control the variance of $y' - x'$ when $x' \neq y'$, to reduce the tendency of the chains to push apart when meeting does not occur.
This motivates what we call the maximal coupling with semi-independent residuals, or the maximal semi-independent coupling, which we define in Algorithm~\ref{alg:maxsemi}. This algorithm differs from the maximal independent coupling in that it has $x'\n = y'\n$ whether or not $x'_1 = y'_1$. The validity of this algorithm follows from essentially the same argument as that of Lemma~\ref{lem:maxquick}.

\begin{algorithm}
	\caption{Draw from the maximal semi-independent coupling of $\N(x, I_d \sigma_d^2)$ and $\N(y, I_d \sigma_d^2)$. \label{alg:maxsemi}}
	\begin{enumerate}
		\item Compute $e = (y-x)/\lVert y - x \rVert$, $m = (y+x)/2$, $x_1 = x'e$, and $y_1 = y'e$
		\item Draw $(x'_1, y'_1)$ from the maximal independent coupling of $\N(x_1, \sigma_d^2)$ and $\N(y_1, \sigma_d^2)$
		using Algorithm~\ref{alg:maxbasic}
		\item Draw $\tilde z \sim \N(m, I_d \sigma^2_d)$, set $z' \n = (I_d - ee') \tilde z$, $x' = x'_1 e + z'\n$, and $y' = y'_1 e + z'\n$
		\item Return $(x',y')$
	\end{enumerate}
\end{algorithm}

\subsection{Optimal transport couplings}

As noted above, the joint distribution of $(\tx, \ty)$ given $\tx \neq \ty$ plays an important role
in determining the distribution of meeting times. This is especially
important since Lemma \ref{lem:meetprob} shows that the probability $\P(x'=y' \g x,y)$ of meeting
proposals must be small until the chains are relatively close. Thus it is natural to consider not
just ways to limit the variance of $y'-x'$ when $x' \neq y'$, but methods for making this quantity
as small as possible.

Given a metric $\delta$ on $\R^d$, we say that $\bq(\xy, \cd) \in \Gamma(Q(x,\cd),Q(y,\cd))$ is an
optimal transport coupling if $\bq(\xy, \cd)$ minimizes $\E_{\xyp \sim \tilde Q}[\delta(x',y;)]$
among all couplings $\tilde Q \in \Gamma(Q(x,\cd),Q(y,\cd))$. In this study we set $\delta(x',y') =
\lVert y' - x' \rVert^2$. Below, we show how to construct an optimal transport coupling between the
residuals of a maximal coupling of $\N(x,I_d \sigma^2_d)$ and $\N(y,I_d \sigma^2_d)$. Optimal
transport couplings are not usually available in closed form, but the symmetries of the multivariate
normal distribution present an opportunity. We begin with the following result in one dimension:

\begin{lemma}
	\label{lem:ot1d} Suppose $\bq(\xy,\cd) \in \Gmax(\N(x,\sigma^2),\N(y,\sigma^2))$, and define the
	residual distributions $\mu(A) := \P(x' \in A \g x' \neq y', x,y)$ and $\nu(A) = \P(y' \in A \g x'
	\neq y', x,y)$ where $\xyp \sim \bq(\xy,\cd)$ and $A \in \calB$. Let $\Phi_x$ and $\Phi_y$ be the
	cumulative distribution functions of $\mu$ and $\nu$ on $\R$, and define the transport map $t_{xy}(x') :=
	\Phi_y^{-1}(\Phi_x(x'))$. If $x' \sim \mu$, then $(x',t_{xy}(x'))$  is an optimal transport coupling
	of $\mu$ and $\nu$. Also $\Phi_x$ and $\Phi_y$ have the
	functional forms given in the proof below.
\end{lemma}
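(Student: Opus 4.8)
The plan is to treat the two assertions separately: first derive closed forms for the residual CDFs $\Phi_x$ and $\Phi_y$, then show that the monotone map $t_{xy}=\Phi_y^{-1}\circ\Phi_x$ realizes the optimal coupling of $\mu$ and $\nu$ for the quadratic cost. For the functional forms I would start from Lemma~\ref{lem:maxpdf} specialized to $d=1$, which gives that the residual of $x'$ has density proportional to $0\vee(q(x,z)-q(y,z))$ and the residual of $y'$ has density proportional to $0\vee(q(y,z)-q(x,z))$, with common normalizing constant $1-p$ where $p=\P(x'=y'\g x,y)$. Taking $x<y$ without loss of generality, the inequality $\N(z;x,\sigma^2)>\N(z;y,\sigma^2)$ holds exactly for $z<m=(x+y)/2$, so $\mu$ is supported on $(-\infty,m)$ and $\nu$ on $(m,\infty)$. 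Integrating these densities and writing the standard normal CDF as $\Phi$, I obtain $\Phi_x$ and $\Phi_y$ as normalized differences of shifted copies of $\Phi$; a convenient check is that $\Phi_y(+\infty)=1$ forces $1-p=2\Phi(r/2\sigma)-1$, consistent with the value of $p$ from Lemma~\ref{lem:meetprob}.

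For the optimality claim I would use the standard reduction for squared-distance cost. Since the marginals fix $\E_\mu[(x')^2]$ and $\E_\nu[(y')^2]$, minimizing $\E[(y'-x')^2]$ over $\Gamma(\mu,\nu)$ is equivalent to maximizing $\E[x'y']$. By the Hoeffding--Fr\'echet inequality this is maximized by the comonotone coupling $(\Phi_x^{-1}(U),\Phi_y^{-1}(U))$ with $U\sim\Unif$. Because $\mu$ admits a density, $\Phi_x$ is continuous and $\Phi_x(x')\sim\Unif$ when $x'\sim\mu$, so the comonotone coupling is realized by the deterministic map $(x',t_{xy}(x'))=(x',\Phi_y^{-1}(\Phi_x(x')))$, which is therefore optimal.

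The computations in the first part are routine and the optimality is classical, so there is no genuine obstacle, only two points to treat with care. The first is the bookkeeping of supports and signs: since $\mu$ lives to the left of $m$ and $\nu$ to the right, the map moves all mass rightward across $m$, and this must be tracked when inverting the CDFs. The second is verifying that the comonotone coupling is achieved by the quantile map and not merely by $U\mapsto(\Phi_x^{-1}(U),\Phi_y^{-1}(U))$, which needs the atomlessness of $\mu$ --- guaranteed here because its residual has a density. I would cite the standard one-dimensional optimal transport result for the comonotonicity step and keep the rest self-contained.
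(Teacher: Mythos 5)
Your proposal is correct and follows essentially the same route as the paper: derive the residual CDFs from Lemma~\ref{lem:maxpdf} (your WLOG $x<y$ forms agree with the paper's two-case expressions, with normalizing constant $F_x(m)-F_y(m)=1-p$), then invoke the one-dimensional quantile-map characterization of optimal transport for atomless marginals. The only difference is that where the paper simply cites \citet[chap. 3.1]{rachev1998mass} for that characterization, you reprove it via the second-moment decomposition, the Hoeffding--Fr\'echet comonotone bound, and the probability integral transform, which is a sound, self-contained substitute.
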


\begin{proof}
	The main result is due to the cumulative distribution function characterization of optimal transport
	maps for non-atomic distributions on $\R$, see e.g. \citet[chap. 3.1]{rachev1998mass}. If $x' \sim
	\mu$ and $y' \sim \nu$ then by Lemma~\ref{lem:maxpdf}, $x'$ and $y'$ the following CDFs:
	\eq{
		\Phi_x(x') = \begin{cases}
			\tfrac{F_x(x' \wedge m) - F_y(x' \wedge m) }{F_x(m) - F_y(m)} & \text{if } x < y \\
			1 - \tfrac{ F_x(x' \vee m) - F_y(x' \vee m) }{F_x(m) - F_y(m)} & \text{if } x \geq y
		\end{cases}
		\qquad
		\Phi_y(y') = \begin{cases}
			\tfrac{F_y(y' \wedge m) - F_x(y' \wedge m) }{F_y(m) - F_x(m)} & \text{if } y < x \\
			1 - \tfrac{ F_y(y' \vee m) - F_x(y' \vee m) }{F_y(m) - F_x(m)} & \text{if } y \geq x.
		\end{cases}
	}
	Here $m = (x+y)/2$ and $F_z(\cd)$ is the CDF of $\N(z, \sigma^2)$ for $z \in \R$.
\end{proof}

We say that $\bq$ is a maximal coupling with optimal transport residuals, or a maximal optimal transport coupling, if $\bq(\xy,\cd) \in \Gmax(Q(x,\cd),Q(y,\cd))$ and if the residuals of $\bq(\xy,\cd)$ follow an optimal transport coupling.
The result above suggests an algorithm for drawing from the maximal optimal transport coupling of one-dimensional normal distributions. See Algorithm~\ref{alg:ot1d} for the details of this method and Lemma~\ref{lem:ot1dvalidity} for a proof of its validity.

\begin{algorithm}
	\caption{Draw from the maximal optimal transport coupling of $\N(x, \sigma^2)$ and $\N(y, \sigma^2)$. \label{alg:ot1d}}
	\begin{enumerate}
		\item Draw $x' \sim \N(x, \sigma^2)$ and $W_x \sim \Unif$
		\item If $W_x \sim N(x' ; x,\sigma^2) \leq \N(x' ; y, \sigma^2)$, set $y' = x'$
		\item Else set $y' = t_{xy}(x')$ using the transport map $t_{xy}$ as defined in Lemma~\ref{lem:ot1d}
		\item Return $(x',y')$
	\end{enumerate}
\end{algorithm}

\begin{lemma}
	\label{lem:ot1dvalidity}
	The output of Algorithm~\ref{alg:ot1d} follows
	a maximal optimal transport coupling of $\N(x, \sigma^2)$ and $\N(y, \sigma^2)$.
\end{lemma}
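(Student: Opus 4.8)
The plan is to verify three things in turn: that the output $(x',y')$ of Algorithm~\ref{alg:ot1d} is a coupling of $\N(x,\sigma^2)$ and $\N(y,\sigma^2)$, that this coupling is maximal, and that its residuals coincide with the optimal transport coupling identified in Lemma~\ref{lem:ot1d}. Throughout I would write $q(z,\cd) = \N(\cd\,;z,\sigma^2)$ and let $M = \{W_x\, q(x,x') \leq q(y,x')\}$ be the meeting event of step 2. The $x'$ marginal is immediate, since $x' \sim \N(x,\sigma^2)$ by step 1 and is never altered. The real content is the $y'$ marginal, and the first step is to identify the law of $x'$ on the non-meeting event $M^c$. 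Conditioning on $x'$ and integrating out $W_x \sim \Unif$ gives $\P(M \g x') = 1 \wedge (q(y,x')/q(x,x'))$, so the joint density of $x'$ on $M^c$ is $q(x,x')\,[\,1 - 1 \wedge (q(y,x')/q(x,x'))\,] = 0 \vee (q(x,x') - q(y,x'))$. By Lemma~\ref{lem:maxpdf} this is exactly the unnormalized $x'$-residual shared by every maximal coupling, so conditional on $M^c$ we have $x' \sim \mu$ in the notation of Lemma~\ref{lem:ot1d}.

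Next I would assemble the $y'$ marginal from its contributions on $M$ and $M^c$. On $M$ we set $y' = x'$, and since the density of $x'$ on $M$ is $q(x,x') \wedge q(y,x')$ we obtain $\P(y' \in A, M) = \int_A q(x,z) \wedge q(y,z)\diff z$. On $M^c$ we set $y' = t_{xy}(x')$; because $t_{xy} = \Phi_y^{-1} \circ \Phi_x$ pushes $\mu$ forward to $\nu$ by Lemma~\ref{lem:ot1d}, and $x' \sim \mu$ on $M^c$, the variable $y'$ then has the unnormalized residual density $0 \vee (q(y,z) - q(x,z))$, giving $\P(y' \in A, M^c) = \int_A 0 \vee (q(y,z) - q(x,z))\diff z$. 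Summing these and using the pointwise identity $a \wedge b + 0 \vee (b-a) = b$ yields $\P(y' \in A) = \int_A q(y,z)\diff z$, so $y' \sim \N(y,\sigma^2)$ and $(x',y')$ is a valid coupling.

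Maximality and the optimal transport property then follow with little extra work. The meeting probability is $\P(M) = \int q(x,z) \wedge q(y,z)\diff z = 1 - \lVert Q(x,\cd) - Q(y,\cd) \rVert_\TV$, which attains the coupling-inequality bound, so the coupling is maximal. Conditional on $M^c$ the pair is exactly $(x', t_{xy}(x'))$ with $x' \sim \mu$, which is the optimal transport coupling of the residuals $\mu$ and $\nu$ by Lemma~\ref{lem:ot1d}; hence the residuals are optimal transport residuals. I expect the only real obstacle to be the bookkeeping in the first two steps—confirming that the law of $x'$ restricted to $M^c$ is the residual $\mu$ and that the monotone map $t_{xy}$ carries it to $\nu$—since once these are in place the marginal, maximality, and transport claims all reduce to citing Lemmas~\ref{lem:maxpdf} and \ref{lem:ot1d}.
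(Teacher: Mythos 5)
Your proof is correct and follows essentially the same route as the paper's: decompose the law of $y'$ into its contributions on the meeting and non-meeting events, identify these with the $\wedge$ and positive-part densities of Lemma~\ref{lem:maxpdf}, sum to recover the $\N(y,\sigma^2)$ marginal, and then handle maximality and the optimal transport property via Lemma~\ref{lem:ot1d}. The only differences are cosmetic: where the paper cites the validity of Algorithm~\ref{alg:maxbasic} both for the meeting-event law and for maximality (equal meeting probability with a known maximal coupling), you re-derive the sub-densities directly by integrating out $W_x$ and invoke the identity $\int q(x,z)\wedge q(y,z)\diff z = 1-\lVert Q(x,\cd)-Q(y,\cd)\rVert_\TV$, which makes your version slightly more self-contained but does not change the argument.
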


\begin{proof}
	$x' \sim \N(x, \sigma^2)$ by construction.
	By Lemma~\ref{lem:maxpdf} and the validity of Algorithm~\ref{alg:maxbasic},
	we have $\P(y' \in A, y' = x' ) = \int_A \N(y' ; y, \sigma^2) \wedge \N(y' ; x, \sigma^2) \diff y'$
	for $A \in \calB$.
	Lemmas~\ref{lem:maxpdf} and \ref{lem:ot1d} also imply
	$\P(y' \in A, y' \neq x' ) = \int_A (\N(y' ; y, \sigma^2) - \N(y' ; x, \sigma^2)) \vee 0 \diff y'$ for $A \in \calB$.
	Together these imply $y' \sim \N(y, \sigma^2)$, so $\xyp$ follows some coupling of $\N(x,\sigma^2)$ and $\N(y,\sigma^2)$. Finally, we note that Algorithm~\ref{alg:ot1d} has exactly the same probability of
	$x'=y'$ as Algorithm~\ref{alg:maxbasic} does when $Q(z, \cd) = \N(z, \sigma^2)$. We know that the coupling implemented in Algorithm~\ref{alg:maxbasic} is maximal, so we conclude that the present one is as well.
\end{proof}

Finally, we combine the result of Lemma~\ref{lem:ot1dvalidity} with the logic of
Algorithm~\ref{alg:maxsemi} to obtain an algorithm for drawing from the maximal coupling with
optimal transport residuals on $\R^d$. See Algorithm~\ref{alg:mot} for a statement of this method
and Lemma~\ref{lem:mot} for a proof of its validity.

\begin{algorithm}
	\caption{Draw from the maximal optimal transport coupling of $\N(x, I_d \sigma_d^2)$ and $\N(y, I_d \sigma_d^2)$ \label{alg:mot}}
	\begin{enumerate}
		\item Compute $e = (y-x)/\lVert y - x \rVert$, $m = (y+x)/2$, $x_1 = x'e$, and $y_1 = y'e$
		\item Draw $(x'_1, y'_1)$ from the maximal optimal transport coupling of $\N(x_1, \sigma_d^2)$ and $\N(y_1, \sigma_d^2)$ using Algorithm~\ref{alg:ot1d}
		\item Draw $\tilde z \sim \N(m, I_d \sigma^2_d)$, set $z' \n = (I_d - ee') \tilde z$,
		$x' = x'_1 e + z'\n$, and $y' = y'_1 e + z'\n$
		\item Return $(x',y')$
	\end{enumerate}
\end{algorithm}

\begin{lemma}
	\label{lem:mot}
	The output of Algorithm~\ref{alg:mot} follows a maximal optimal transport coupling of $\N(x, I_d \sigma^2_d)$ and $\N(y, I_d \sigma^2_d)$.
\end{lemma}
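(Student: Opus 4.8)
The plan is to mirror the structure of the proof of Lemma~\ref{lem:maxquick}, since Algorithm~\ref{alg:mot} differs from Algorithm~\ref{alg:maxsemi} only in that Step 2 draws the $e$-component pair $(x_1', y_1')$ from the maximal \emph{optimal transport} coupling (Algorithm~\ref{alg:ot1d}) rather than the maximal \emph{independent} coupling. There are three things to verify: that the output is a valid coupling of $\N(x, I_d \sigma^2_d)$ and $\N(y, I_d \sigma^2_d)$, that this coupling is maximal, and that the residuals follow an optimal transport coupling under the squared-distance cost.

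First I would check that $\xyp$ is a coupling of the two target distributions. This argument is essentially identical to the corresponding step in Lemma~\ref{lem:maxquick} and the validity argument sketched for Algorithm~\ref{alg:maxsemi}. By Lemma~\ref{lem:ot1dvalidity}, Step 2 produces $x_1' \sim \N(x_1, \sigma_d^2)$ and $y_1' \sim \N(y_1, \sigma_d^2)$ with $x_1 = e'x$ and $y_1 = e'y$. In Step 3 the shared orthogonal component $z'\n = (I_d - ee')\tilde z$ satisfies $z'\n \sim \N((I_d-ee')m, (I_d - ee')\sigma_d^2)$, and since $(I_d - ee')m = (I_d - ee')x = (I_d - ee')y = m\n$, the common orthogonal part has exactly the correct marginal law $\N(m\n, (I_d - ee')\sigma_d^2)$ for both chains. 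Using independence between the $e$-component draw and $\tilde z$, together with the factorization $\N(z; \mu, I_d\sigma_d^2) = \N(z_1; \mu_1, \sigma_d^2)\,\N(z\n; \mu\n, I_{d-1}\sigma_d^2)$, I conclude $x' \sim \N(x, I_d\sigma_d^2)$ and $y' \sim \N(y, I_d\sigma_d^2)$.

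For maximality I would use Lemma~\ref{lem:meetprob} exactly as in Lemma~\ref{lem:maxquick}. By construction $x' = y'$ if and only if $x_1' = y_1'$, and Lemma~\ref{lem:ot1dvalidity} guarantees that the one-dimensional coupling of $\N(x_1, \sigma_d^2)$ and $\N(y_1, \sigma_d^2)$ is maximal, so the pair meets with probability $\P(\chi^2_1 \geq (y_1 - x_1)^2 / (4\sigma_d^2))$. Since $y_1 - x_1 = e'(y-x) = \lVert y - x\rVert = r$, this equals $\P(\chi^2_1 \geq r^2/(4\sigma_d^2))$, which by Lemma~\ref{lem:meetprob} is exactly the maximal meeting probability for $\N(x, I_d\sigma_d^2)$ and $\N(y, I_d\sigma_d^2)$; hence the $\R^d$ coupling is maximal.

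The main obstacle, and the genuinely new part, is showing that the residuals form an optimal transport coupling for the squared-Euclidean cost on $\R^d$. Here I would exploit the fact that on the event $x' \neq y'$ the two chains share the orthogonal component $z'\n$, so $\lVert y' - x'\rVert^2 = (y_1' - x_1')^2$ reduces to the squared distance between the $e$-components alone. Any competing coupling in $\Gamma(Q(x,\cd), Q(y,\cd))$ with the same residual supports projects to a coupling of the one-dimensional residual marginals $\mu$ and $\nu$ from Lemma~\ref{lem:ot1d}, and its squared-distance cost is bounded below by the cost of the projected marginals since sharing $z'\n$ contributes zero orthogonal cost. Lemma~\ref{lem:ot1dvalidity} ensures the $e$-component residuals already follow the one-dimensional optimal transport coupling, which minimizes $\E[(y_1' - x_1')^2]$ over all couplings of $\mu$ and $\nu$. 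Thus no coupling can achieve a smaller expected squared distance, and the residuals of Algorithm~\ref{alg:mot} are optimal. The care required is in arguing that restricting the transport to matched orthogonal components loses nothing relative to the full $\R^d$ optimal transport problem between the residual distributions; this follows because the orthogonal marginals of both residuals are the common law $\N(m\n, (I_d-ee')\sigma_d^2)$, so the synchronous matching of $z'\n$ is itself optimal in the orthogonal directions and the problem decouples into independent $e$ and $e^\perp$ transport problems.
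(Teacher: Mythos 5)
Your proposal is correct, and the first two parts (valid coupling, maximality) follow the paper's proof exactly: both reduce to the argument of Lemma~\ref{lem:maxquick} with the single modification that the orthogonal components are shared rather than independent. Where you genuinely diverge is the optimal-transport-residuals step. The paper dispatches this by citing Theorem~2.1 of \citet{knott1984optimal}: writing $y' = T_{xy}(x') = t_{xy}(x_1')e + x'\n$ and observing that $\partial T_{xy}(x')/\partial x' = t_{xy}'(x_1')\,ee' + (I_d - ee')$ is symmetric positive definite (since $t_{xy}$ is increasing), which certifies optimality for quadratic cost in one line. You instead give an elementary tensorization argument: the squared cost splits as $(y_1'-x_1')^2 + \lVert y'\n - x'\n\rVert^2$, the $e$-projection of any competing coupling of the residuals is a coupling of the one-dimensional residuals $\mu$ and $\nu$, so every competitor's cost is at least the one-dimensional optimal cost, while the algorithm attains exactly that cost because its orthogonal components coincide. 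This buys self-containedness — no appeal to the Knott--Smith/Brenier criterion beyond the one-dimensional result already proved in Lemma~\ref{lem:ot1dvalidity} — at the price of one fact you assert but should make explicit: by Lemma~\ref{lem:maxpdf} and the Gaussian factorization $q(x,z)-q(y,z) = \big(\N(z_1;x_1,\sigma_d^2)-\N(z_1;y_1,\sigma_d^2)\big)\N(z\n; m\n, I_{d-1}\sigma_d^2)$, the $d$-dimensional residual marginals are product measures whose $e$-marginals are precisely $\mu$ and $\nu$ and whose orthogonal marginals are the common law $\N(m\n,(I_d-ee')\sigma_d^2)$; this is what makes the projection lower bound match the cost your coupling achieves. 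Your phrasing ``competing coupling in $\Gamma(Q(x,\cd),Q(y,\cd))$'' should also be ``competing coupling of the residual distributions,'' since that is the set over which optimality is defined, but this is cosmetic rather than a gap.
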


\begin{proof}
	The proof that $(x',y')$ follows a maximal coupling of $\N(x,I_d \sigma^2_d)$ and $\N(y, I_d \sigma^2_d)$ is almost identical to the argument of Lemma~\ref{lem:maxquick}, except we now use the same draw for $y'\n = x'\n$ rather than independent draws $x'\n, y'\n \sim \N(m\n, (I_d - e e') \sigma^2_d)$.
	To see that $(x', y')$ follows an optimal transport coupling conditional on $x' \neq y'$,
	we apply Theorem 2.1 of \citet{knott1984optimal}.
	That result says that if we can write $y' = T_{xy}(x')$ such that $y'$ has the correct distribution and
	$\partial T_{xy}(x') / \partial x'$ is symmetric and positive definite, then $(x', T_{xy}(x'))$ is an optimal transport
	coupling for $\delta(x',y') = \lVert y' - x' \rVert^2$. In this case we have $y' = T_{xy}(x') = t_{xy}(x_1') e + x'\n$. Symmetry follows immediately and positive definiteness follows since $t_{xy}$ is monotonically increasing.
\end{proof}

\subsection{The maximal reflection coupling}

Another coupling in the spirit of the previous section is the maximal coupling with reflection
residuals, also called the maximal reflection coupling. It is the maximal analogue to the
reflection coupling defined near the beginning of Section~\ref{sec:props}, and it has previously been
considered in \citet{eberle2019quantitative, bou2018coupling}, and
\citet{Jacob2020}. We say that $\xyp \sim \bq(\xy, \cd)$ is a maximal reflection coupling of
$\N(x,I_d\sigma^2_d)$ and $\N(y,I_d\sigma^2_d)$ if it is a maximal coupling and if
$x' \neq y'$ implies $\eta = (I_d - 2 e e') \xi$,
where we define $\xi = x' - x$ and $\eta = y' - y$.
When $x'=y'$, the maximal reflection coupling yields the same distribution of $\xyp$ as any other maximal
coupling, and when $x' \neq y'$ it reflects the increments of each chain over the hyperplane
equidistant between $x$ and $y$.

This coupling is related to the reflection coupling of diffusions described in
\citet{lindvall1986coupling, Eberle2011, hsu2013maximal}, and other studies.
These continuous-time reflection couplings
are sometimes maximal couplings of processes,
in the strong sense that they produce the fastest meeting times allowed by the coupling inequality.
We will see that using the maximal reflection coupling for RWM proposals also delivers good
meeting time performance. In our setting this seems to arise from a felicitous interaction
between reflection couplings and the Metropolis accept/reject step. Understanding the analogy between the continuous- and discrete-time settings remains an interesting open question, especially for reflection couplings.

\begin{algorithm}
	\caption{Draw from the maximal reflection coupling of $\N(x, \sigma^2)$ and $\N(y, \sigma^2)$ \label{alg:ref1d}}
	\begin{enumerate}
		\item Draw $x' \sim \N(x, \sigma^2)$ and $W_x \sim \Unif$
		\item If $W_x \sim N(x' ; x,\sigma^2) \leq \N(x' ; y, \sigma^2)$, set $y' = x'$
		\item Else set $\xi = x' - x$, $\eta = - \xi$, and $y' = y + \eta$
		\item Return $(x', y')$
	\end{enumerate}
\end{algorithm}

As with the maximal independent and optimal transport couplings, we describe an efficient method for
drawing from the maximal reflection coupling.
We begin with Algorithm~\ref{alg:ref1d}, which yields draws from the maximal reflection coupling on $\R$. The validity of this algorithm is established in
\citet[sec. 2]{bou2018coupling} and
\citet[sec. 4]{Jacob2020}. Algorithm~\ref{alg:refl} produces draws from the general form of this coupling on $\R^d$, and we establish the validity of this algorithm in Lemma~\ref{lem:refl}.
For the algorithm and its validity proof, recall that we have defined $z_1 := e'z$ and $z\n = (I_d - e e') z$ for any $z \in \R^d$.

\begin{algorithm}[b]
	\caption{Draw from the maximal reflection coupling of $\N(x, I_d \sigma_d^2)$ and $\N(y, I_d \sigma_d^2)$ \label{alg:refl}}
	\begin{enumerate}
		\item Compute $e = (y-x)/\lVert y - x \rVert$, $m = (y+x)/2$, $x_1 = x'e$, $y_1 = y'e$, and $m\n = (I_d - e e') m$
		\item Draw $(x'_1, y'_1)$ from the maximal reflection coupling of $\N(x_1, \sigma_d^2)$ and $\N(y_1, \sigma_d^2)$,
		by the method of Algorithm~\ref{alg:ref1d}
		\item Draw $\zeta \sim \N(0, I_d \sigma^2_d)$ and set $x' = x'_1 e + m\n + \zeta\n$, and $y' = y'_1 e + m\n + \zeta\n$
		\item Return $(x',y')$
	\end{enumerate}
\end{algorithm}

\begin{lemma}
	\label{lem:refl}
	The output $\xyp$ of Algorithm~\ref{alg:refl} is distributed according to a maximal reflection coupling of $\N(x, I_d \sigma_d^2)$ and $\N(y, I_d \sigma_d^2)$.
\end{lemma}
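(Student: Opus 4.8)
The plan is to follow the template of Lemma~\ref{lem:maxquick}, establishing in turn that $\xyp$ is a coupling of the two target normals, that it is maximal, and that it satisfies the reflection property $\eta = (I_d - 2ee')\xi$ whenever $x' \neq y'$. The key structural fact, which I would record first, is that $e \propto y - x$ forces $(I_d - ee')(y-x) = 0$ and hence $x\n = y\n = m\n$; the shared orthogonal component $m\n + \zeta\n$ therefore sits at the correct mean for both chains simultaneously.

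For the marginals I would argue exactly as in Lemma~\ref{lem:maxquick}. The $e$-component $x'_1 \sim \N(x_1, \sigma_d^2)$ comes from the one-dimensional maximal reflection coupling, while $x'\n = m\n + \zeta\n \sim \N(x\n, (I_d - ee')\sigma_d^2)$, since $\zeta\n \sim \N(0,(I_d - ee')\sigma_d^2)$ and $m\n = x\n$. These pieces are independent because $\zeta$ is drawn independently of $(x'_1,y'_1)$, so $x' = x'_1 e + x'\n \sim \N(x, I_d \sigma_d^2)$, and symmetrically $y' \sim \N(y, I_d \sigma_d^2)$.

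Maximality then follows directly from Lemma~\ref{lem:meetprob}, as in Lemma~\ref{lem:maxquick}. By construction $x' = y'$ if and only if $x'_1 = y'_1$, since $x'$ and $y'$ share the same orthogonal component. Algorithm~\ref{alg:ref1d} draws $(x'_1,y'_1)$ from a maximal coupling of $\N(x_1,\sigma_d^2)$ and $\N(y_1,\sigma_d^2)$, whose meeting probability is $\P(\chi^2_1 \geq (y_1-x_1)^2/(4\sigma_d^2))$; since $y_1 - x_1 = e'(y-x) = r$, this equals the bound $\P(\chi^2_1 \geq r^2/(4\sigma_d^2))$ that any maximal coupling on $\R^d$ must attain.

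The genuinely new step is the reflection identity, which I would verify by decomposing $\xi = x' - x$ and $\eta = y' - y$ into their $e$ and orthogonal parts. One computes $\xi\n = x'\n - x\n = \zeta\n = y'\n - y\n = \eta\n$, so the orthogonal parts of the two increments already coincide. In the $e$ direction, whenever $x' \neq y'$ (equivalently $x'_1 \neq y'_1$) Algorithm~\ref{alg:ref1d} sets $y'_1 - y_1 = -(x'_1 - x_1)$, i.e. $\eta_1 = -\xi_1$. Since $(I_d - 2ee')\xi = -\xi_1 e + \xi\n$, combining the two components gives $(I_d - 2ee')\xi = -\xi_1 e + \zeta\n = \eta_1 e + \eta\n = \eta$, as required. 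I expect this final bookkeeping—keeping the reflection acting only on the $e$-component while leaving the shared orthogonal part fixed—to be the main point to get right; the remaining arguments are direct transcriptions of the earlier maximal-coupling proofs.
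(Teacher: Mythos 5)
Your proposal is correct and follows essentially the same route as the paper's proof: the marginal and maximality arguments are the same transcription of Lemma~\ref{lem:maxquick} (which the paper simply cites rather than repeating), and your reflection verification via the decomposition $\xi = \xi_1 e + \zeta\n$, $\eta = -\xi_1 e + \zeta\n$ is the same computation the paper performs with the identities $(I_d - 2ee')e = -e$ and $(I_d - 2ee')(I_d - ee') = I_d - ee'$. The only difference is expository detail, and your componentwise bookkeeping in fact sidesteps a small typo in the paper's displayed equation, where $(y_1 - y_1)e$ should read $(y_1' - y_1)e$.
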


\begin{proof}
	Essentially the same argument as in Lemmas~\ref{lem:maxquick} and \ref{lem:mot} establishes that $(x',y')$
	follows a maximal coupling. For the reflection condition
	we recall that $y = m + r/2 e$ and $x = m - r/2 e$,
	which implies $y = y_1 e + m \n$ and $x = x_1 e + m\n$.
	Thus $y' - y = (y_1' - y_1) e + \zeta\n$ and $x' - x = (x_1' - x_1) e + \zeta\n$.
	We also have $(I_d - 2 e e') e = - e$ and $(I_d - 2 e e')(I_d - e e') = (I_d - e e')$.
	By the definition of Algorithm~\ref{alg:ref1d}, $y_1' - y_1 = - (x_1' - x_1)$ when $x_1' \neq y_1'$.
	Thus $x' \neq y'$ implies
	\eq{
		(I_d - 2 e e')(x'-x) = - (I_d - 2 e e') ( y_1' - y_1) e + (I_d - 2 e e') \zeta\n = (y_1
		- y_1) e + \zeta \n = y' - y.
	}
	We conclude that $\xyp$ satisfies the reflection condition when $x' \neq y'$,
	and so the output of Algorithm~\ref{alg:refl} follows a maximal reflection coupling of $\N(x,I_d \sigma^2_d)$ and $\N(y, I_d \sigma^2_d)$.
\end{proof}

\subsection{Hybrid couplings}

It is also possible to choose among the coupling strategies described above -- or any valid coupling
of the proposal distributions -- depending on the current state pair $\xy$. As implied by
Lemma~\ref{lem:meetprob}, maximal couplings of Gaussian distributions have very little chance of
producing $x' = y'$ unless $r = \lVert y - x \rVert$ is relatively small. Thus we can deploy one
coupling method such as a maximal coupling when $r$ is below some threshold and a different coupling
when $r$ is above it. This is reminiscent of the two-coupling strategies used in
\citet{smith2014gibbs, pillai2017kac}, and \citet{bou2018coupling}. This approach can be deployed to
produce faster meeting between chains. It can also simplify some theoretical arguments since, for example,
the simple reflection coupling is simpler to analyze than its maximal coupling counterpart.

\section{Acceptance step couplings}
\label{sec:accrej}

Recall that we write $P$ for the MH transition kernel generated by the proposal kernel $Q$ and
acceptance rate function $a$. We construct our MH kernel coupling $\bp \in \Gamma(P,P)$ as follows.
First, we draw $(X_0, Y_0)$ such that $X_0, Y_0 \sim \pi_0$ for some arbitrary initial distribution $\pi_0$.
We begin each iteration $t$ by drawing proposals $\xyp \sim \bq(\xy,\cd)$, where
$(x,y) = (X_t, Y_t)$. Then we draw acceptance indicators $\bxy$ from
some joint distribution $\bb(\xy, \xyp)$ on $\zo^2$. Finally we set $(X_{t+1}, Y_{t+1}) = (X,Y)$
where $X = \trx$ and $Y = \try$. For any $x,y \in \R^d$ and $A \in \calB \otimes \calB$, we write
$\bp(\xy, A) := \P(\XY \in A \g x,y)$ for the resulting joint transition distribution. We want
$\bp(\xy,\cd) \in \Gamma(P(x,\cd),P(y,\cd))$, and which implies a few constraints on the acceptance
indicator coupling $\bb(\xy,\xyp)$.

Given any mapping $\bb$ from current state and proposal pairs $\xy, \xyp$ to probabilities
on $\zo^2$, we can define joint acceptance rate functions $a_x(\xy, \xyp) := \P(b_x =
1 \g x,y,x',y')$ and $a_y(\xy, \xyp) := \P(b_y = 1 \g x,y,x',y')$, where $\bxy \sim \bb(\xy,\xyp)$.
These definitions make $\bb(\xy,\xyp)$ a coupling of $\Bern(a_x(\xy,\xyp))$ and
$\Bern(a_y(\xy,\xyp))$. We want the transition pair $\XY$ defined above to imply $X \sim P(x,\cd)$
and $Y \sim P(y,\cd)$ conditional on $\xy$. In \citet{o2021couplings}, this is shown to hold if $\xyp \sim
\bq(\xy, \cd)$ for some $\bq \in \Gamma(Q,Q)$ and if
\eq{
	\P(b_x = 1 \g x,y, x') & = \E[a_x(\xy,\xyp) \g x,y,x'] = a(x,x') \quad \text{for $Q(x,\cd)$-almost all $x'$} \\
	\P(b_y = 1 \g x,y, y') & = \E[a_y(\xy,\xyp) \g x,y,y'] = a(y,y')\quad \text{for $Q(y,\cd)$-almost all $y'$.}
}
These conditions are intuitive, but they allow for relatively complicated forms of $a_x$, $a_y$, and
$\bb$. For example, this flexibility is used in \citet{OLeary2020} to formulate an acceptance indicator coupling
$\bb$ which yields a maximal transition kernel coupling $\bp \in \Gamma(P,P)$ any time it is used
with a maximal proposal coupling $\bq \in \Gamma(Q,Q)$. For now, we focus on acceptance indicator
couplings with $a_x(\xy,\xyp) = a(x,x')$ and $a_y(\xy,\xyp) = a(y,y')$.
The resulting acceptance couplings take a simple form, as described in the following result.

\begin{lemma}
	\label{lem:accind}
	Suppose $\bxy \sim \bb(\xy,\xyp) \in \Gamma(\Bern(a(x,x')), \Bern(a(y,y')))$
	for state pairs $\xy,\xyp \in \R^d \times \R^d$.
	Then for some $\rho_{xy} \in [0 \vee (a(x,x') + a(y,y') -1), a(x,x') \wedge a(y,y')]$,
	\begin{alignat*}{2}
		& \P(b_x=1, b_y=1)  = \rho_{xy}  \quad
		&& \P(b_x=1, b_y=0) = a(x,x') - \rho_{xy} \\
		& \P(b_x = 0, b_y = 1)  = a(y,y') - \rho_{xy}  \quad
		&& \P(b_x = 0, b_y = 0) = 1 - a(x,x') - a(y,y') + \rho_{xy}.
	\end{alignat*}
\end{lemma}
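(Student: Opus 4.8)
The plan is to exploit the fact that a probability distribution on the four-point set $\zo^2$ is completely determined by its four atoms, and that fixing the two Bernoulli marginals leaves only a single free parameter. First I would abbreviate $\alpha := a(x,x')$ and $\beta := a(y,y')$ to lighten the notation, and set $\rho_{xy} := \P(b_x = 1, b_y = 1)$, which is the quantity the lemma singles out.

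Since $\bb(\xy,\xyp) \in \Gamma(\Bern(\alpha), \Bern(\beta))$, the marginals are pinned down: $\P(b_x = 1) = \alpha$ and $\P(b_y = 1) = \beta$. The next step is to recover the remaining three atoms by subtraction. From $\P(b_x=1) = \P(b_x=1,b_y=1) + \P(b_x=1,b_y=0)$ I get $\P(b_x=1,b_y=0) = \alpha - \rho_{xy}$, and symmetrically $\P(b_x=0,b_y=1) = \beta - \rho_{xy}$. The final atom follows from requiring that the four probabilities sum to one, giving $\P(b_x=0,b_y=0) = 1 - \alpha - \beta + \rho_{xy}$. These are exactly the four expressions in the statement.

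It then remains to characterise the admissible range of $\rho_{xy}$. Each of the four atoms must be nonnegative, which yields the four inequalities $\rho_{xy} \geq 0$, $\rho_{xy} \leq \alpha$, $\rho_{xy} \leq \beta$, and $\rho_{xy} \geq \alpha + \beta - 1$. Intersecting them gives precisely $\rho_{xy} \in [0 \vee (\alpha + \beta - 1), \alpha \wedge \beta]$, the claimed interval. I would note in passing that this interval is always nonempty for $\alpha, \beta \in [0,1]$, so that conversely any choice of $\rho_{xy}$ in the range determines a genuine coupling in $\Gamma(\Bern(\alpha),\Bern(\beta))$; these are simply the Fr\'echet--Hoeffding bounds specialised to two Bernoulli variables.

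There is no substantive obstacle here, since the entire argument is linear bookkeeping on a $2 \times 2$ table. The only point requiring a little care is to impose nonnegativity on all four atoms: the fourth atom $\P(b_x=0,b_y=0)$ is the one that produces the lower bound $\alpha + \beta - 1$ and is the easiest to overlook. If anything, the main subtlety is conceptual rather than computational, namely the recognition that $\Gamma(\Bern(\alpha),\Bern(\beta))$ is a one-parameter family parametrised by $\rho_{xy}$, which is what makes the joint law essentially unique once $\rho_{xy}$ is fixed.
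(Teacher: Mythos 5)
Your proof is correct and follows essentially the same route as the paper's: define $\rho_{xy} := \P(b_x=1,b_y=1)$, recover the remaining atoms from the marginal and normalization constraints, and deduce the admissible interval from nonnegativity of the four atoms. Your added remark on the converse (the Fr\'echet--Hoeffding characterization of $\Gamma(\Bern(\alpha),\Bern(\beta))$ as a one-parameter family) goes slightly beyond the lemma but is accurate and harmless.
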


\begin{proof}
	Set $\rho_{xy} := \P(b_x=1, b_y=1)$. The values for $\P(b_x=1,b_y=0)$ and $\P(b_x=0,b_y=1)$ follow from the margin conditions,
	and then the value of $\P(b_x=0,b_y=0)$ follows from the requirement that $\sum_{i,j \in \zo} \P(b_x=i,b_y=j) = 1$. The constraints on $\rho_{xy}$ follow from the requirement that all of these joint probabilities must fall in $[0,1]$.
\end{proof}

Note for example that independent draws
$b_x \sim \Bern(a(x,x'))$ and $b_y \sim \Bern(a(y,y'))$ imply
$\rho_{xy} = a(x,x')a(y,y')$. This satisfies the given
bounds, since $\rho_{xy} = a(x,x')a(y,y') \leq a(x,x') \wedge a(y,y')$ and $\rho_{xy} = a(x,x') a(y,y') \geq a(x,x')
+ a(y,y') - 1$ from the fact that $(1-a(x,x'))(1-a(y,y')) \geq 0$.

The two chains can only meet if $\tx = \ty$ is proposed and both proposals are accepted. This
suggests that we should maximize the probability of $\bxy = (1,1)$
by choosing $\rho_{xy} = a(x,x') \wedge a(y,y')$. By Lemma~\ref{lem:accind}, this also maximizes
the probability of $\bxy = (0,0)$ and of $b_x = b_y$.
Thus, this $\rho_{xy}$ corresponds to using the maximal coupling of $\Bern(a(x,x'))$ and $\Bern(a(y,y'))$, which is unique in this case.
Simulation results suggest that some couplings tend to produce contraction between chains when both
proposals are accepted, no change in the separation between chains when both are rejected, and an increase
in separation when one chain is accepted and the other is rejected. This further argues for drawing $\bxy$ from
its maximal coupling conditional on $\xy$ and $\xyp$.

Write $A \symdiff B = (A \sm B) \cup (B \sm A)$ for the symmetric difference of $A,B \in \calB$.
As described in Section~\ref{sec:intro}, it is convenient to describe acceptance indicators and their couplings in terms of uniform random variables. In particular we have the following:

\begin{lemma}
	\label{lem:unifcpl}
	Fix $a_x, a_y \in [0,1]$.
	$\tilde B \in \Gamma(\Bern(a_x),\Bern(a_y))$
	if and only if there exists a coupling $\bar U \in \Gamma(\Unif, \Unif)$ such that
	$\bxy \sim \tilde B$ for $b_x = 1(U \leq a_x)$ and $b_y = 1(V \leq a_y)$.
	In particular, $\P(b_x=b_y \g x,y,x',y')$ is maximized when $U = V$
	and minimized when $V = 1-U$.
\end{lemma}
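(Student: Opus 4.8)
The plan is to establish the equivalence in two directions and then read off the extremal statements directly from the cell probabilities of Lemma~\ref{lem:accind}. The easy direction is the ``if'' claim. Suppose $\bar U \in \Gamma(\Unif,\Unif)$ and set $b_x = 1(U \leq a_x)$ and $b_y = 1(V \leq a_y)$. Since $U \sim \Unif$, I have $\P(b_x = 1) = \P(U \leq a_x) = a_x$, so $b_x \sim \Bern(a_x)$, and likewise $b_y \sim \Bern(a_y)$. Hence the law of $\bxy$ has the prescribed Bernoulli marginals and defines a coupling in $\Gamma(\Bern(a_x),\Bern(a_y))$.

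For the converse I would argue by explicit construction, and this is the step I expect to require the most care. Given any $\tilde B \in \Gamma(\Bern(a_x),\Bern(a_y))$, Lemma~\ref{lem:accind} shows that $\tilde B$ is pinned down by the single parameter $\rho := \P(b_x=1,b_y=1)$ via the four cell probabilities $p_{11}=\rho$, $p_{10}=a_x-\rho$, $p_{01}=a_y-\rho$, and $p_{00}=1-a_x-a_y+\rho$. I would then take $(U,V)$ to have a density on $[0,1]^2$ that is piecewise constant on the four rectangles cut out by the lines $u = a_x$ and $v = a_y$, assigning each rectangle the constant density equal to its target cell probability divided by its area; e.g.\ the rectangle $[0,a_x]\times[0,a_y]$ receives density $p_{11}/(a_x a_y)$. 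The main obstacle is verifying that this yields uniform marginals, but this follows by a short telescoping computation: integrating out $v$ gives the marginal density of $U$ as $(p_{11}+p_{10})/a_x = 1$ on $[0,a_x]$ and $(p_{01}+p_{00})/(1-a_x) = 1$ on $(a_x,1]$, so $U \sim \Unif$, and by symmetry $V \sim \Unif$, giving $\bar U \in \Gamma(\Unif,\Unif)$. By construction $\P(U \leq a_x, V \leq a_y) = p_{11}$ with the analogous identities for the other three cells, so thresholding recovers exactly $\tilde B$. The degenerate cases $a_x \in \{0,1\}$ or $a_y \in \{0,1\}$ collapse some rectangles to zero area and can be checked directly.

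Finally I would prove the extremal claims. From Lemma~\ref{lem:accind}, $\P(b_x = b_y \g x,y,x',y') = p_{11} + p_{00} = 1 - a_x - a_y + 2\rho$, which is strictly increasing in $\rho$, and the admissible range of $\rho$ is $[\,0 \vee (a_x + a_y - 1),\ a_x \wedge a_y\,]$. Setting $V = U$ yields $\rho = \P(U \leq a_x \wedge a_y) = a_x \wedge a_y$, the largest admissible value, so $\P(b_x = b_y \g x,y,x',y')$ is maximized. Setting $V = 1 - U$ yields $\rho = \P(1 - a_y \leq U \leq a_x) = 0 \vee (a_x + a_y - 1)$, the smallest admissible value, so the same probability is minimized. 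This matches the two endpoints of the $\rho$ interval and completes the argument.
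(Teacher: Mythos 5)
Your proof is correct and follows essentially the same route as the paper's: the ``if'' direction via the thresholding computation of the marginals, the converse via the same piecewise-constant density on the four rectangles determined by $u=a_x$ and $v=a_y$, and the extremal claims by identifying $V=U$ and $V=1-U$ with the endpoints $a_x \wedge a_y$ and $0 \vee (a_x+a_y-1)$ of the admissible range of $\rho$ from Lemma~\ref{lem:accind}. You in fact supply a few details the paper leaves implicit (explicit verification of the uniform marginals, the degenerate cases $a_x,a_y\in\{0,1\}$, and the monotonicity of $\P(b_x=b_y)$ in $\rho$), which only strengthens the argument.
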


\begin{proof}
	Suppose $\bar U \in \Gamma(\Unif,\Unif)$ and $\bxy$ are defined as in the statement above.
	$b_x \sim \Bern(a_x)$ since $\P(b_x = 1) = \P(U \leq a_x) = a_x$, and similarly for $b_y$.
	Thus the law of $\bxy$ is a coupling of $\Bern(a_x)$ and $\Bern(a_y)$.
	For the converse, by Lemma~\ref{lem:accind} any coupling $\tilde B$ will be characterized by $\rho = \P(b_x=b_y=1)$.
	Thus we must find a coupling $\bar U \in \Gamma(\Unif,\Unif)$ such that
	if $(U,V) \sim \bar U$ then $\P(U \leq a_x, V \leq a_y) = \rho$.
	One such coupling is the distribution on $[0,1]^2$ with density
	\eq{
		f(u,v) = \left\{\hspace{-1ex}
		\begin{array}{llll}
			\frac{\rho}{a_x a_y}                                      &\text{if } u \leq a_x, v \leq a_y  \hspace{3em}
			& \frac{a_x- \rho}{a_x (1-a_y)}                     &\text{if } u \leq a_x, v > a_y     \\
			\frac{a_y- \rho}{(1-a_x) a_y}                         &\text{if } u > a_x, v \leq a_y     \hspace{3em}
			& \frac{1 - a_x - a_y + \rho}{(1-a_x)(1- a_y)} &\text{if } u > a_x, v > a_y.
		\end{array}
		\right.
	}
	Note that when $U = V \sim \Unif$, we obtain $\P(b_x=b_y=1) = a(x,x') \wedge a(y,y')$, the maximal value of $\rho$,
	and when $1 - V = U \sim \Unif$ we achieve $\P(b_x=b_y=1) = 0 \vee (a(x,x') + a(y,y') - 1)$, the minimal value of $\rho$.
	By Lemma~\ref{lem:accind}, the probability of $b_x=b_y$ is maximized when $\rho$ is maximized and minimized when $\rho$ is minimized.
\end{proof}

While the acceptance indicator couplings described above are appealing in their simplicity, we may
wonder if we can do better by adapting our choice of $\rho_{xy}$ depending on the current state pair
$\xy$ or the proposals $\xyp$. In particular, we consider an `optimal transport' approach to
selecting $\rho_{xy}$, in which we aim to minimize the expected distance between state $X=b_x x' +
(1-b_x) x$ and $Y=b_y y' + (1-b_y) y$ after the joint accept/reject step. For each $x,y,x',y' \in
\R^d$, we solve \eq{ \min_{\rho} \{\ \E[\delta(X,Y) \g x,y,x',y']\ : \ \rho \in [(a(x,x') + a(y,y')
	- 1 )\vee 0, a(x,x') \wedge a(y,y')] \}. } As above, we set $\delta(X,Y) = \lVert Y -X \rVert^2$.
Note that this is a linear program with linear constraints in $\rho$, so for typical proposal
couplings $\bq$ the above will almost surely have solution either $\rho = 0 \vee (a(x,x') + a(y,y')
- 1)$ or $\rho = a(x,x') \wedge a(y,y')$. Qualitatively, the lower-bound solution
will be optimal when $\delta(x,y')$ and $\delta(x',y)$ are small relative to $\delta(x,y)$ and
$\delta(x',y')$. Below, we see that this is uncommon for the proposal distributions we consider.

%%%

\section{Simulations}
\label{sec:rwmsims}

We now consider a set of simulations on the relationship between coupling design and meeting times,
with a focus on the role of dimension. High-dimensional target distributions are a common
challenge in applications of MCMC, and previous studies such as \citet{Jacob2020} suggest that
apparently similar couplings can produce unexpected and sometimes dramatic differences in meeting
behavior with increasing dimension $d$. We expect that theory will someday provide definitive
guidance on the design of couplings that scale well with dimension. For now, simulations like the
following suggest the use of some couplings over others and offer a range of hypotheses for further analysis.

Our simulations target the standard multivariate normal distribution $\pi_d = \N(0, I_d)$ with a
range of dimensions $d$. We focus on the RWM algorithm with proposal kernel $Q(x, \cd) = \N(x, I_d
\sigma^2_d)$ with $\sigma^2_d = \ell^2/d$ and $\ell = 2.38$. This form of proposal variance yields
an acceptance rate converging to the familiar 0.234 value as $d \to \infty$, and diffusion limit
arguments suggest that this choice produces rapid or even optimal mixing behavior at stationarity
\citep{Gelman1996, roberts1997weak} and in the transient phase \citep{Christensen2005,
	Jourdain2014}. Meeting times typically depend on both the coupling and the marginal behavior of each
chain. Our multivariate normal setting is a particularly simple one, but it allows us to isolate the
effect of each coupling decision without concern about the mixing behavior of the marginal kernel.

\begin{table}
	\caption{Average meeting times (1000 replications each, $d=10$) \label{tab:combos}}
	\centering
	\begin{tabular}{lcccccc}
		\toprule
		\hfill Acceptance Coupling & \multicolumn{2}{c}{$V = U$} & \multicolumn{2}{c}{Independent} & \multicolumn{2}{c}{$V = 1 - U$} \\
		Proposal Coupling & Avg $\tau$ &  S.E. & Avg $\tau$ &  S.E.& Avg $\tau$ &  S.E. \\
		\midrule
		Maximal Reflection        &    30 &  0.8 &          51 &  1.4 &        68 &   2.0 \\
		Maximal Semi-Independent  &    54 &  1.5 &          85 &  2.4 &       105 &   3.3 \\
		Maximal Optimal Transport &   104 &  3.0 &         155 &  4.6 &       183 &   5.7 \\
		Maximal Independent       &   279 &  8.5 &         302 &  9.4 &       354 &  11.2 \\
		\bottomrule
	\end{tabular}
\end{table}

We begin by considering a full grid of proposal and acceptance coupling combinations in $d=10$. We
initialize each chain using an independent draw from the target distribution. We then iterate until
meeting occurs and record the observed meeting time over 1000 replications with each pair of coupling
options. The averages and standard deviations of these meeting times $\tau$ appear in
Table~\ref{tab:combos}. We will consider each of these options in more detail below, but for now it
is important to note a few facts. First, even in low dimension like $d=10$, we already observe
more than an order of magnitude difference in the meeting times associated with the best and worst
coupling combinations. In the simulations below we generally evaluate high-performance
couplings over a much wider range of dimensions than low-performance couplings to avoid unnecessary
computational expense.

Second, among the acceptance couplings considered above, the $U=V$ coupling consistently outperforms
the $U,V \iidsim \Unif$ coupling, which consistently outperforms the $V= 1 - U$ coupling. These
relationships hold for each choice of proposal coupling. A similar relationship exists among the
proposal couplings, with the maximal reflection coupling delivering the best meeting times and the
maximal independent coupling delivering the worst ones, again for any choice of acceptance indicator
coupling. These robust and monotone relations also hold in higher and lower dimensions and with
other initialization and proposal variance options. Although the meeting times shown here arise
from a complex interplay of proposal and acceptance behavior, these simulations suggest that some
options can be regarded as generally better or worse than others.

In this exercise and in many of the simulations below, we initialize chains using independent draws
from the target distribution. When $\pi_d = \N(0, I_d)$, the initialization method does not seem to
affect the relative performance of the couplings considered below. This may not hold for all
targets, e.g. mixture distributions with well-separated modes. The development of couplings and
initialization strategies to address especially challenging targets stands as an important topic for
future research.

\subsection{Proposal couplings}

We now consider the relationship between proposal couplings and meeting times in more detail. As
above, we initialize each chain with an independent draw from the target distribution. Here we use
the $U=V\sim \Unif$ coupling at the Metropolis step, which maximizes the probability of making the
same accept/reject decision for both chains. As illustrated in Table~\ref{tab:combos}, this
acceptance coupling seems to produce the fastest meeting times for a range of proposal couplings.
For each proposal coupling and dimension, we run 1000 pairs of chains until meeting occurs. The
average meeting times from this test appear in Figure~\ref{fig:ft:prop}.

In Figure \ref{fig:mt:prop:nonrefl}, we show the average meeting times for the maximal couplings
with independent, optimal transport, semi-independent, and reflection residuals, as defined in
Section~\ref{sec:props}. Figure \ref{fig:mt:prop:refl} presents the corresponding results for the
maximal-reflection coupling and for hybrid couplings that deploy the maximal reflection coupling
when $r = ||y-x|| < \bar{r}/\sqrt{d}$ but use the simple reflection coupling when the chains are
further apart. We consider hybrid couplings with a range of values of the cutoff parameter
$\bar{r}$.

\begin{figure}
	\centering
	\subfloat[\label{fig:mt:prop:nonrefl} Maximal proposal distribution couplings]{
		\includegraphics[trim={.8cm .5cm 2.75cm 2.5cm}, clip=true,
		width=0.45\textwidth]{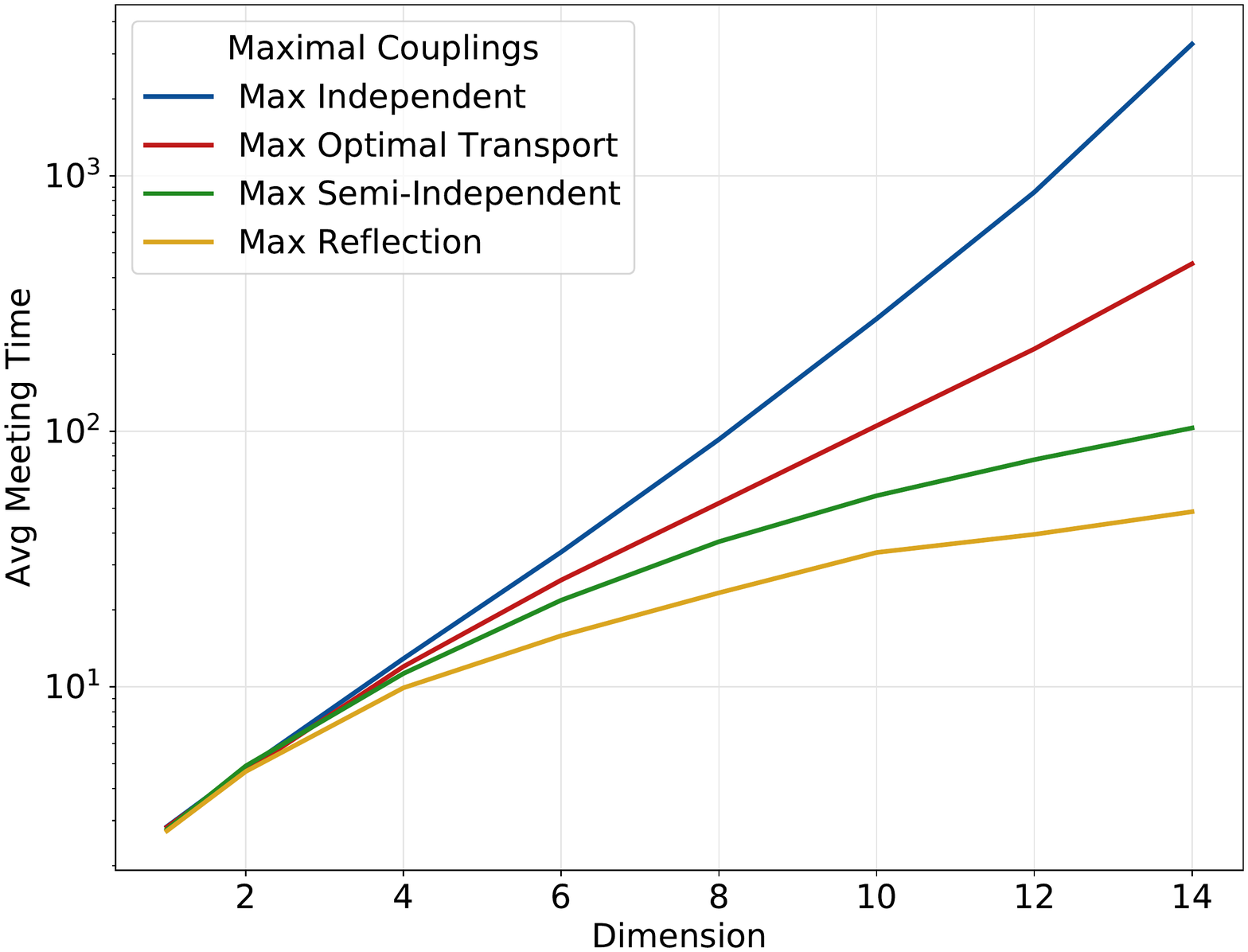} }
	\subfloat[\label{fig:mt:prop:refl} Maximal and hybrid reflection couplings]{
		\includegraphics[trim={.8cm .5cm 2.75cm 2.5cm}, clip=true,
		width=0.45\textwidth]{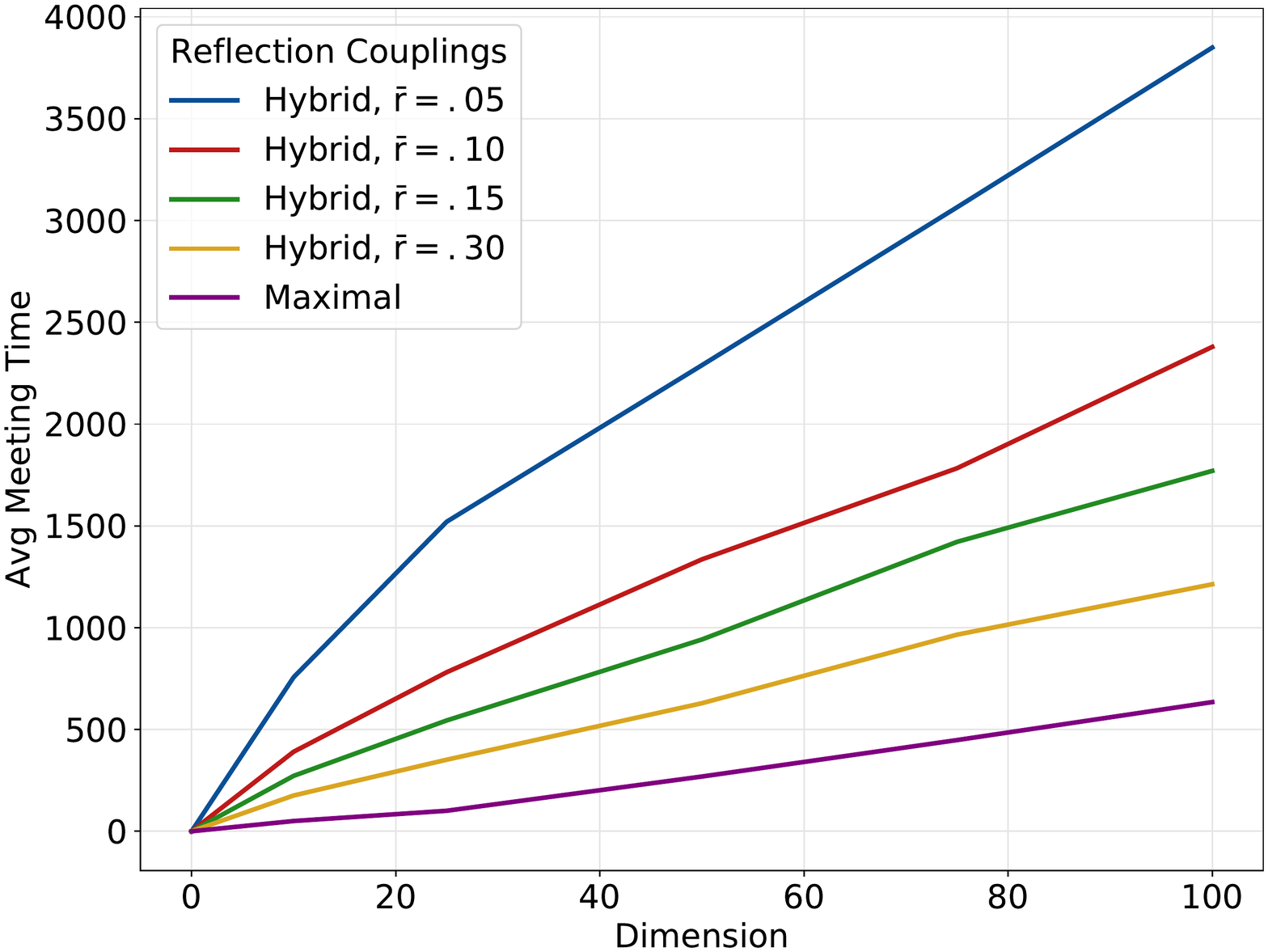} }
	\caption{\label{fig:ft:prop}~Scale behavior of the average meeting time of coupled RWM chains as a
		function of the dimension of the target distribution $\pi = \N(0, I_d)$, under proposal
		distribution coupling options defined in Section~\ref{sec:props}. The chains are initialized at
		independent draws from the target and a $V=U$ coupling is used at the Metropolis step. Left:
		Scaling under four maximal couplings of the proposal distribution. Right: Scaling under the maximal
		reflection coupling and simple/maximal reflection coupling hybrids under a range of cutoff parameters $\bar r$.}
\end{figure}

These results suggest that meeting times grow exponentially in dimension under the maximal coupling
with independent residuals, close to linearly in dimension under the maximal reflection coupling,
and somewhere in between for the other two maximal couplings. The hybrid couplings and maximal
reflection coupling show a similar order of dependence on dimension, and the hybrid couplings
display an inverse relationship between average meting time and $\bar{r}$. This reflects an
increasing number of missed opportunities to meet under the hybrid couplings, since smaller values
of $\bar r$ result in more situations when $r \geq \bar r / \sqrt d$ even though $r$ is small enough to produce a reasonable probability of meeting under a maximal coupling.

Any maximal coupling of proposal distributions produces meeting proposals with the same probability,
as a function of $R_t = \lVert Y_t - X_t \rVert$. Thus the variation in average $\tau$ reflects
differences in coupled chain dynamics conditional on not meeting. The degree of contraction between
chains seems to play a particularly important role. As noted above, just after Lemma~\ref{lem:meetprob}, the
chains must be within a distance $r_d = \bigO(1/\sqrt{d})$ to maintain a fixed probability of
proposed meetings as $d$ increases. Thus the combination of a proposal and acceptance coupling must
generate contraction $\E[R_{t+1} - R_t \g X_t, Y_t] < 0$ to within a range $r_d$ to avoid a fall-off
in meeting probability as a function of dimension. The results above suggest that some proposal
couplings do this better than others.

To visualize this behavior, we run 1000 pairs of coupled chains under a range of maximal and
non-maximal couplings, as described in  Section~\ref{sec:props}. We fix $d=100$, initialize chains
independently from the target, and use the $U=V \sim \Unif$ coupling at the accept reject/step. We
run all pairs of chains for 2500 iterations and use the sticky coupling described in
Section~\ref{sec:defn} to maintain $X_t = Y_t$ for $t \geq \tau$. Finally, we compute $R_t = \lVert
Y_t - X_t \rVert$ and plot the average distance over replications as a function of the iteration
$t$. See Figure~\ref{fig:mt:prop:rtrace} for these results.

\begin{figure}
	\centering
	\includegraphics[trim={.8cm .5cm 2.75cm 2.5cm}, clip=true,
	width=0.55 \textwidth]{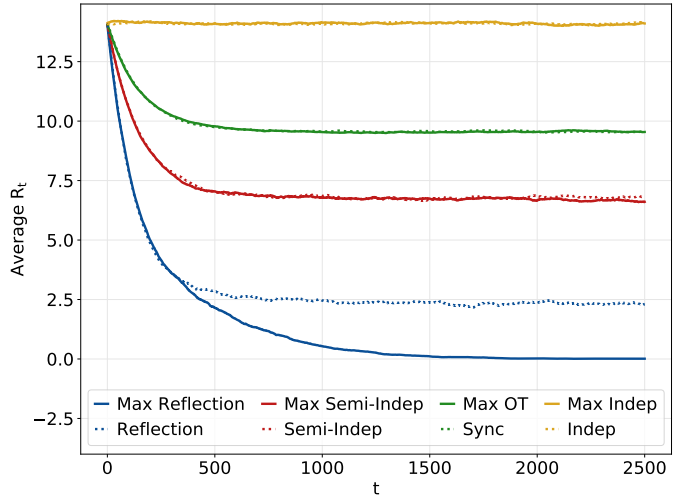}
	\caption{Average distance between chains over 1000 replications as a function of iteration $t$.
		Here $d=100$, and as in Figure~\ref{fig:ft:prop} these chains are initialized at independent draws
		from the target. At the meeting time $\tau$, chains switch to a sticky coupling in which $x'=y'$ and
		$U=V$. Maximal couplings and their non-maximal equivalents produce similar dynamics until $R_t$ is
		small enough for there to be a reasonable chance of meeting. This effect is visible for the
		reflection couplings but not the other options, which stop contracting before the chains are
		close enough for these effects to make a difference.
		\label{fig:mt:prop:rtrace}}
\end{figure}

The dynamics produced in this exercise provide a compelling explanation for the meeting time
behavior observed in Figure~\ref{fig:ft:prop}. In the absence of meeting, each coupling seems to
produce contraction down to a certain degree of separation between chains. For the maximal independent
coupling that appears to be almost exactly  $\E[\lVert Y_0 - X_0 \rVert]$, the distance obtained by
independent draws from the target distribution. The maximal optimal transport coupling and maximal
semi-independent coupling produce contraction to within a smaller radius. The explosive increase in
meeting times under these couplings suggests that these critical distances do not keep pace with the
$\bigO(1/\sqrt{d})$ rate noted above. By the same token, the maximal reflection coupling appears to
produce sufficient contraction to eventually meet with high probability. Among the four
maximal couplings considered here and their four non-maximal counterparts, only the maximal reflection coupling produces a high enough meeting probability to eventually diverge from its non-maximal counterpart.

We can also visualize these differences in drift directly, by creating pairs $(X_t,Y_t)$ with a
specific $R_t = r$, running a single step of the coupled MH kernel, and recording the resulting distance
$R_{t+1}$. We show the output of such a test in Figure~\ref{fig:mt:prop:contract}. Again we set
$d=100$ and use the $U=V\sim \Unif$ coupling at the acceptance step. We consider a range of $r$
values and initialize $\xy$ to have $e=(1,0,\dots)$, $m_1 = 1$, and $||m|| = \sqrt{d}$. In this case
we run 10,000 replications for each coupling and $r$ value. Consistent with the results above, we
find that the different proposal couplings display a range of contraction behavior as a function of
the distance between the chains, although all are contractive when the chains are far apart and
repulsive when the chains are close together. Except for the reflection coupling, the $R_t$ value
where each contraction line crosses the x-axis corresponds to the long-run average value of $R_t$,
as one would expect for a chain in close to a stable equilibrium around this point.

We conclude by noting that the meeting time, separation, and drift behavior illustrated in the plots
above agrees with our expectations in some cases more than others. For instance, it is not
surprising that using a maximal coupling with independent residuals in the proposal step produces
poor contraction behavior as a function of dimension. Although the proposal variance $\sigma^2_d$
shrinks in $d$, this coupling produces almost independent values of $X$ and $Y$ conditional on $x'
\neq y'$, whose separation can be expected to increase linearly in the number of independent
dimensions. Thus the flat line in Figure~\ref{fig:mt:prop:rtrace} agrees with intuition.

Each of the other three couplings has the property that $x\n' = y\n'$ when $x' \neq y'$, which
limits the potential for variance from these components as a function of dimension. However, the
relative performance of the maximal semi-independent, maximal optimal transport, and maximal reflection
couplings is almost the opposite of what one might expect. The optimal transport coupling seems to
produce the least contraction in spite of producing the smallest values of $\lVert x' - y' \rVert$
conditional on $x' \neq y'$. At the same time, the reflection coupling appears to produce the most
contraction despite maximizing the variance of $y'_1 - x'_1$. These differences seem likely to stem from the interaction of these couplings with the acceptance step.

\begin{figure}
	\centering
	\includegraphics[trim={.4cm .5cm 2.75cm 2.5cm}, clip=true,
	width=0.55\textwidth]{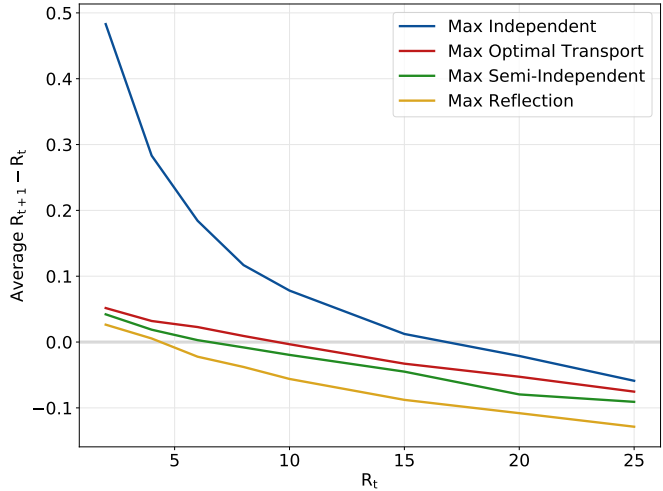}
	\caption{Average contraction as a function of the current distance between chains for four maximal couplings.
		For each point $R_t=r$ we construct a state pair such that $\lVert Y_t - X_t \rVert = r$. We then run one RWM iteration with the specified proposal coupling and the $U=V$ acceptance coupling, and record the resulting $R_{t+1} = \lVert Y_{t+1} - X_{t+1} \rVert$. We compute averages over 10,000 replications for each coupling and $r$ to obtain the curves shown here. The depicted drift behavior appears consistent with the meeting times and time series dynamics of Figures~\ref{fig:ft:prop} and \ref{fig:mt:prop:rtrace}.
		\label{fig:mt:prop:contract}
	}
\end{figure}

\subsection{Acceptance couplings}

Next we consider couplings of the accept/reject step. As noted in Section~\ref{sec:accrej}, we focus
on acceptance indicator couplings that accept both chains at exactly the MH rate for any pair of
proposal states. In light of Lemma~\ref{lem:unifcpl}, it is convenient to define acceptance
indicators $b_x = 1(U \leq a(x,x'))$ and $b_y = 1(V \leq a(y,y'))$ in terms of underlying uniform
random variables. We can realize three basic couplings by drawing $U \sim \Unif$ and then either
drawing $V \sim \Unif$ independently, setting $V = U$, or setting $V=1-U$. The $V=U$ coupling
maximizes the probability of $b_x = b_y$ while the $V=1-U$ coupling minimizes it. We also consider
the `optimal transport' acceptance indicator coupling described in Section~\ref{sec:accrej}. We
recall that this option almost surely coincides with either the $V=U$ or $V=1-U$ coupling at each
iteration, depending on which of these minimizes the expected distance $\E[ \lVert Y - X \rVert^2
\g x,y,x',y']$.

\begin{figure}
	\centering
	\subfloat[
	\label{fig:mt:accrej_refl:linear} Linear scale]{
		\includegraphics[trim={.8cm .5cm 2.75cm 2.5cm}, clip=true,
		width=0.45\textwidth]{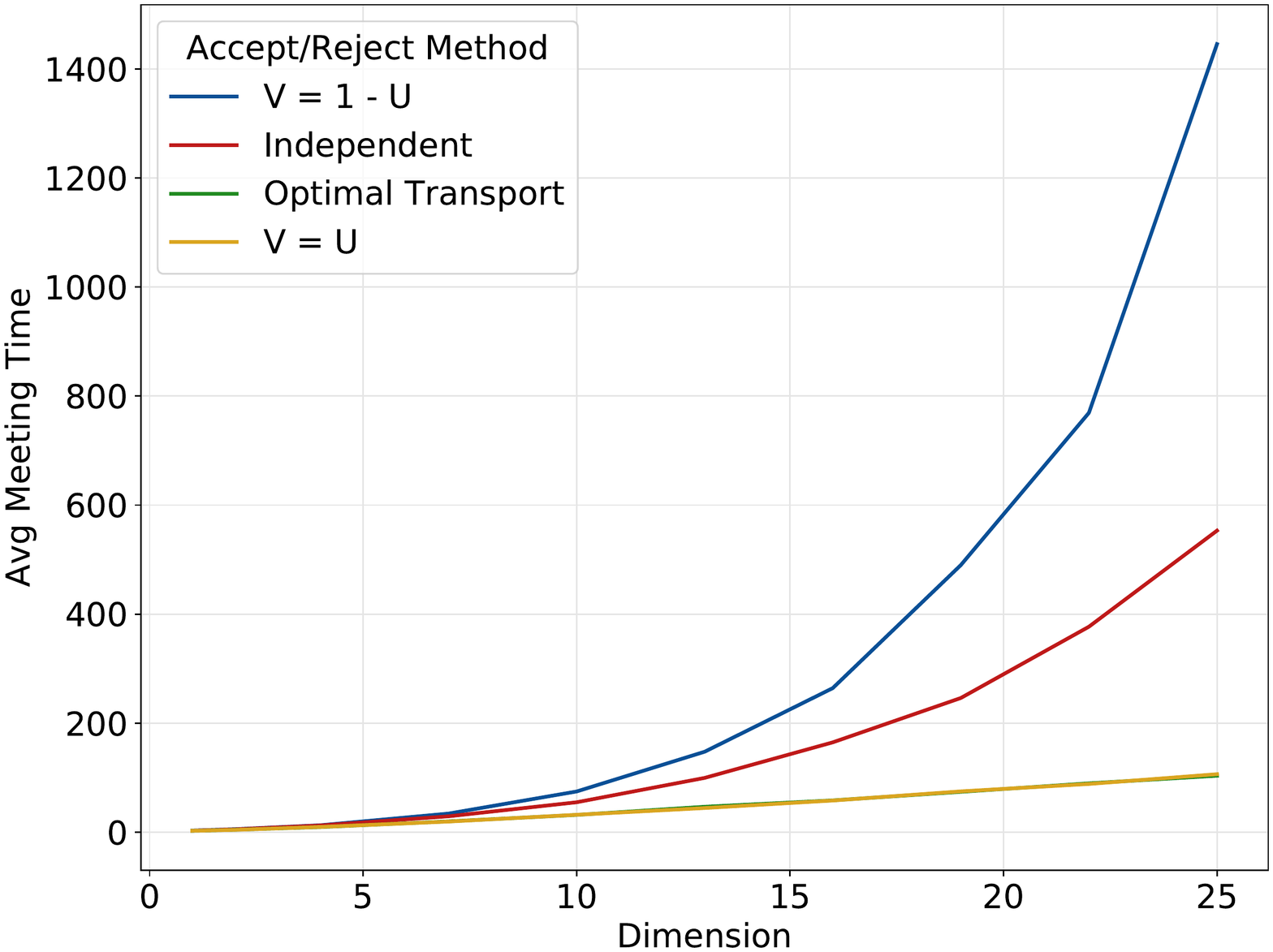} }
	\subfloat[
	\label{fig:mt:accrej_refl:loglog} Log scale]{
		\includegraphics[trim={.8cm .5cm 2.75cm 2.5cm}, clip=true,
		width=0.45\textwidth]{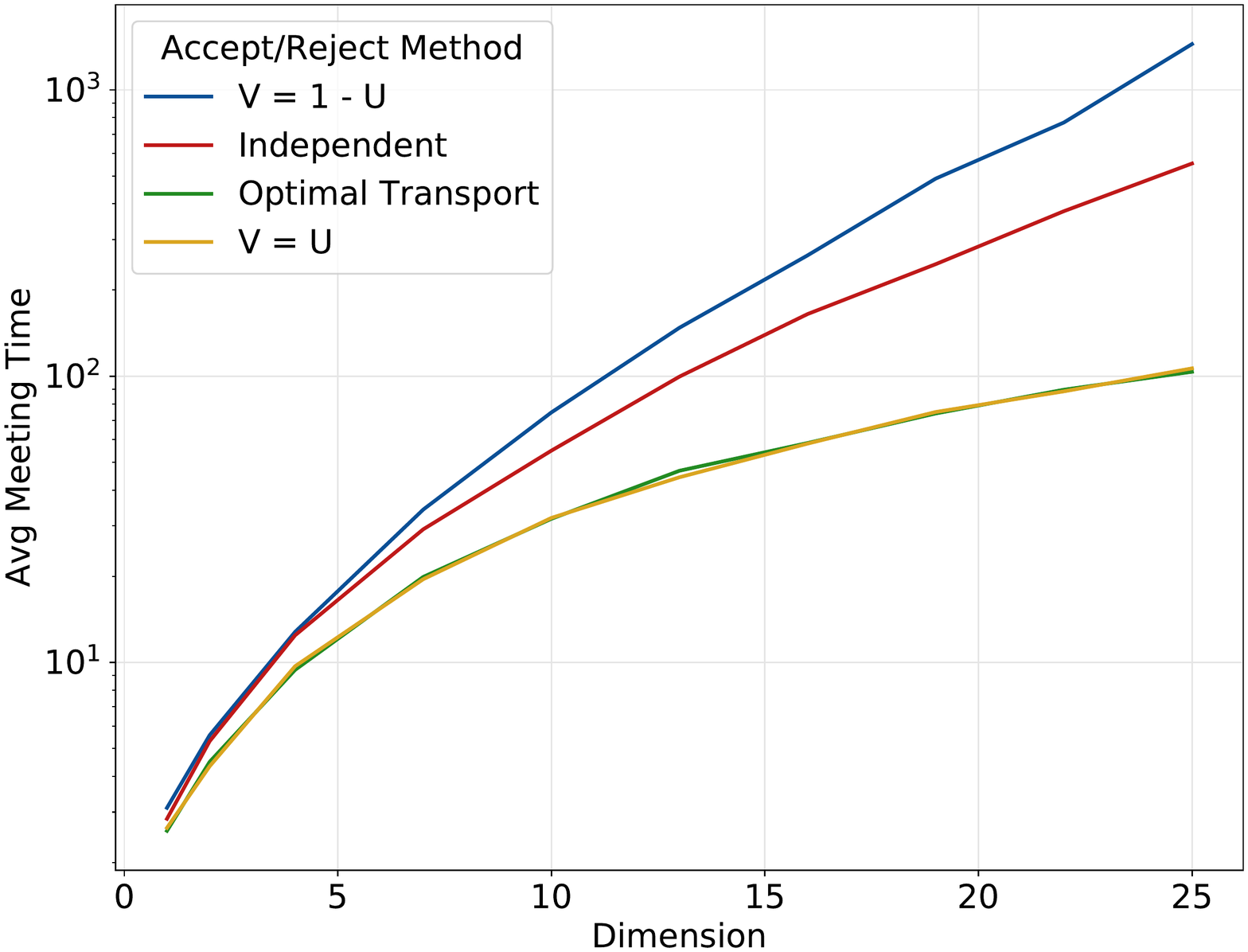} }
	\caption{\label{fig:mt:accrej_refl}~ Scale behavior of the average meeting time of coupled MH
		chains as a function of the dimension of the target distribution $\pi = \N(0, I_d)$, under
		various acceptance step couplings defined in Section~\ref{sec:accrej}. Here the chains are
		initialized at independent draws from the target, the proposal distributions are related by a
		maximal reflection coupling, and averages are taken over 1000 replications for each dimension and coupling. Left: the $V=1-U$ and independent $V,U$ methods produce meeting times which grow rapidly in dimension, in contrast to the $V=U$ and optimal transport couplings, which produce approximately linear behavior. Right: A log scale plot suggests that the $V=1-U$ and independent methods grow at less than an exponential rate, which would appear as a straight line on this plot.
	}
\end{figure}

We present simulation results on these four acceptance step couplings in
Figure~\ref{fig:mt:accrej_refl}. In each case we use the maximal reflection coupling of proposal
distributions and initialize using independent draws from the target. In
Figure~\ref{fig:mt:accrej_refl:linear}, we see that the $V=U$ coupling produces meeting times that
scale approximately linearly in dimension, while these increase more rapidly under the independent
and $V = 1-U$ couplings. The log-linear plot in Figure~\ref{fig:mt:accrej_refl:loglog} suggests that
these latter meeting times may still be less than exponential in dimension.

%% Meeting times under different accept/reject strategies
\begin{figure} \centering \subfloat[\label{fig:mt:accrej_max:choice} Optimal transport coupling
	choice]{ \includegraphics[trim={.8cm .5cm 2.75cm 2.5cm}, clip=true,
		width=0.45\textwidth]{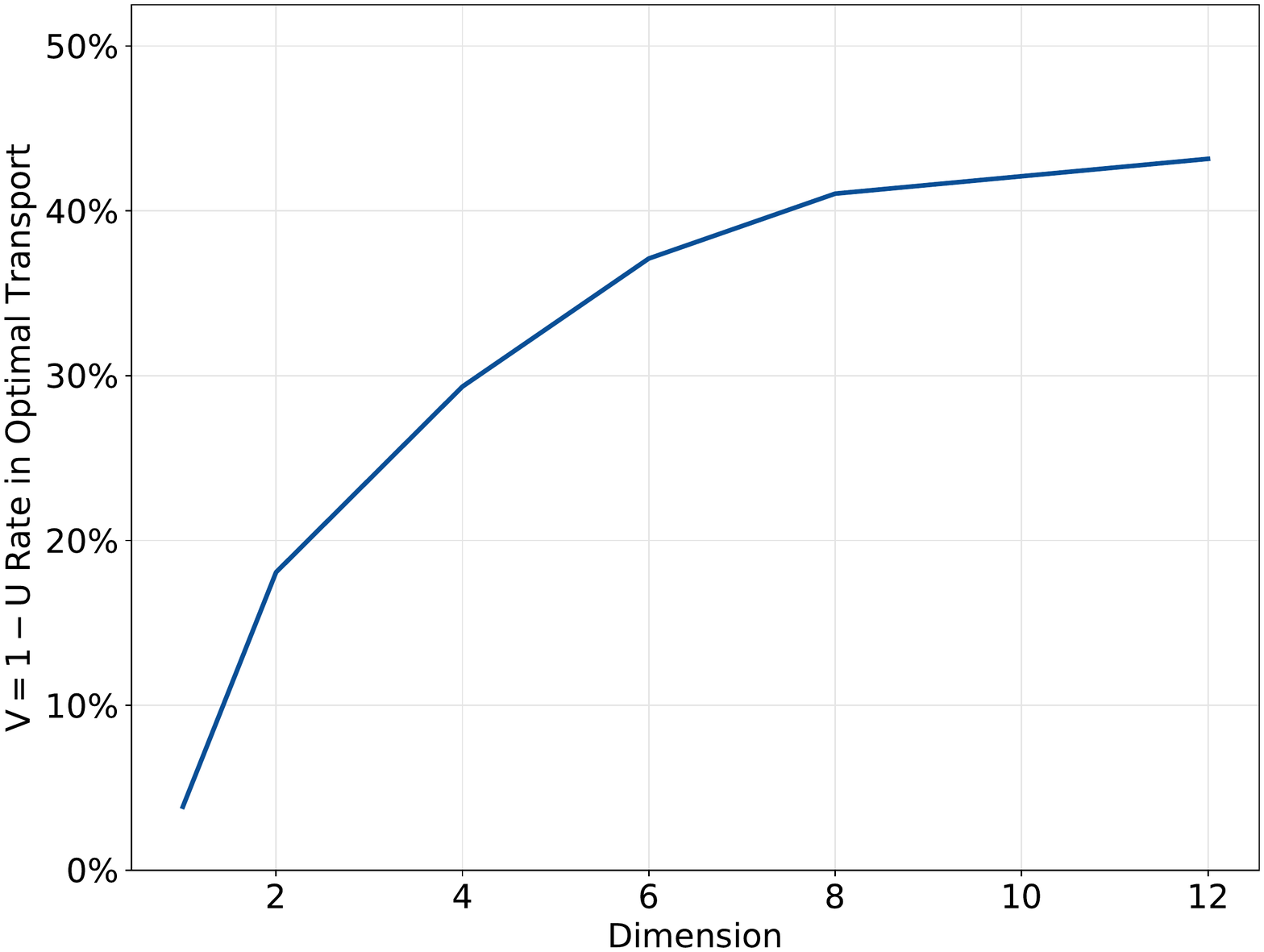} }
	\subfloat[\label{fig:mt:accrej_max:log} Meeting times (log scale)]{ \includegraphics[trim={.8cm .5cm
			2.75cm 2.5cm}, clip=true, width=0.45\textwidth]{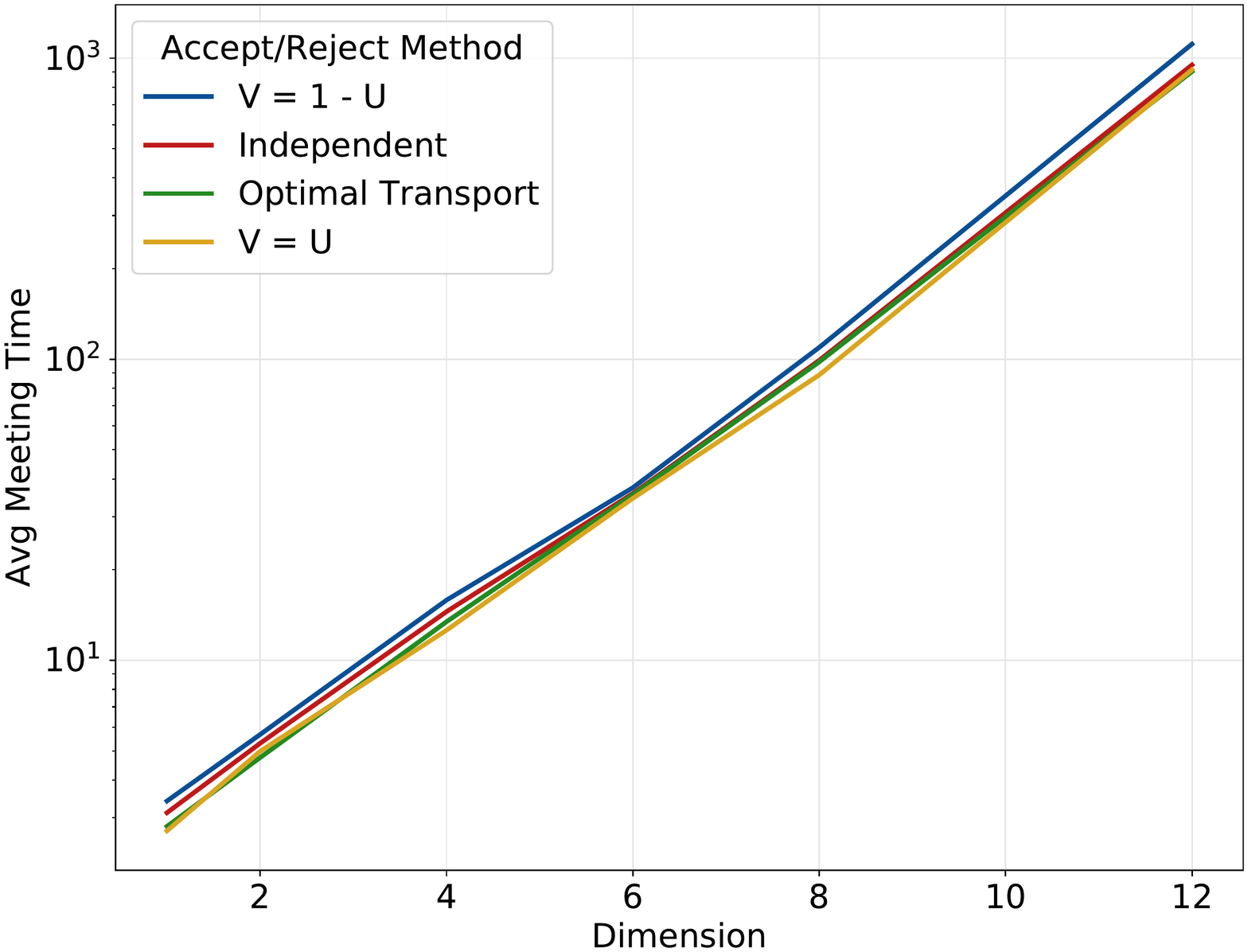} }
	\caption{\label{fig:mt:accrej_max}~Behavior of the optimal transport acceptance step coupling,
		as a function of the dimension of the target distribution
		$\pi = \N(0, I_d)$. Here the proposal distributions are related by a maximal coupling with \textit{independent}
		residuals, and as usual the chains are initialized by independent draws from the target distribution and replicated 1000 times per coupling and dimension.
		Left: the optimal transport coupling coincides with the $V=1-U$ coupling at a rate approaching 50\% as
		the dimension $d$ of the target increases. Right: all four acceptance step couplings deliver similar,
		exponentially growing meeting times under the maximal independent coupling of proposal
		distributions. }
\end{figure}

The optimal transport and $U=V$ couplings produce nearly identical results when applied to the
maximal reflection coupling of proposal distributions. A closer look at this scenario reveals that
the optimal transport coupling coincides with the $U=V$ coupling in all 1000 replications when $d>1$
and in approximately 96.6\% of replications in the $d=1$ case. We observe qualitatively identical
behavior when the proposals are maximally coupled with optimal transport and semi-independent
residuals.

Figure~\ref{fig:mt:accrej_max} shows that the acceptance step optimal transport coupling
displays more complex behavior when the proposal distributions follow a maximal coupling with
independent residuals. Here $V = 1 - U$ is optimal in a fraction of iterations going to  50\% as $d$
increases. Nevertheless, the resulting meeting times are almost indistinguishable from the meeting times
delivered by the $V=U$ and $V=1-U$  couplings. This suggests that under the maximal coupling with
independent residuals, the rapid growth in meeting times is due to the proposal coupling
more than any particular choice of acceptance indicator coupling.

%% Meeting times under different accept/reject strategies
\begin{figure}
	\centering
	\subfloat[\label{fig:mt:accrej:rtrace} Average separation by iteration $t$]{
		\includegraphics[trim={.8cm .5cm 2.75cm 2.5cm}, clip=true, width=0.45\textwidth]{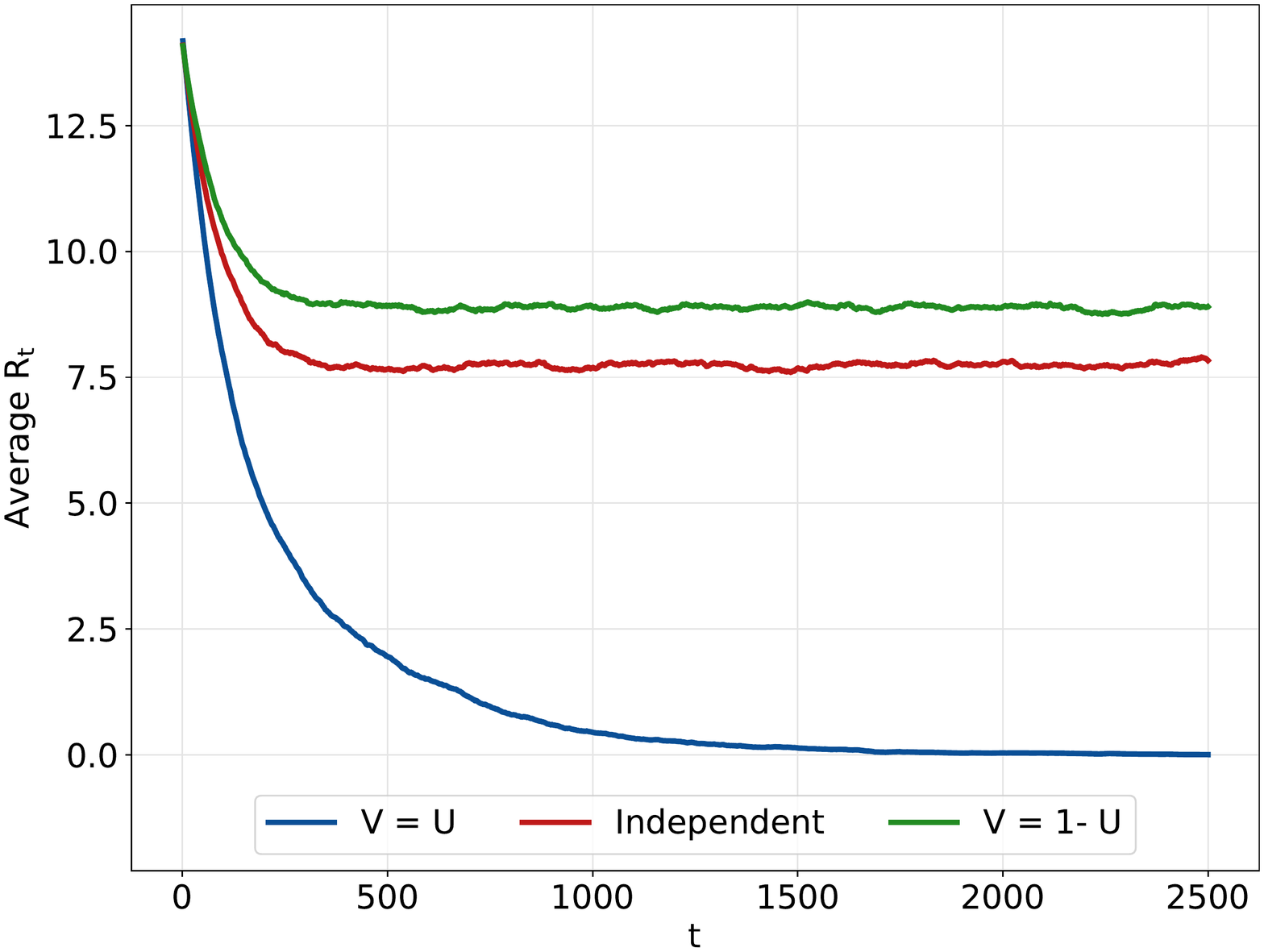} }
	\subfloat[\label{fig:mt:accrej:drift} Contraction vs. distance between chains ]{
		\includegraphics[trim={.5cm .5cm 2.75cm 2.5cm}, clip=true,
		width=0.45\textwidth]{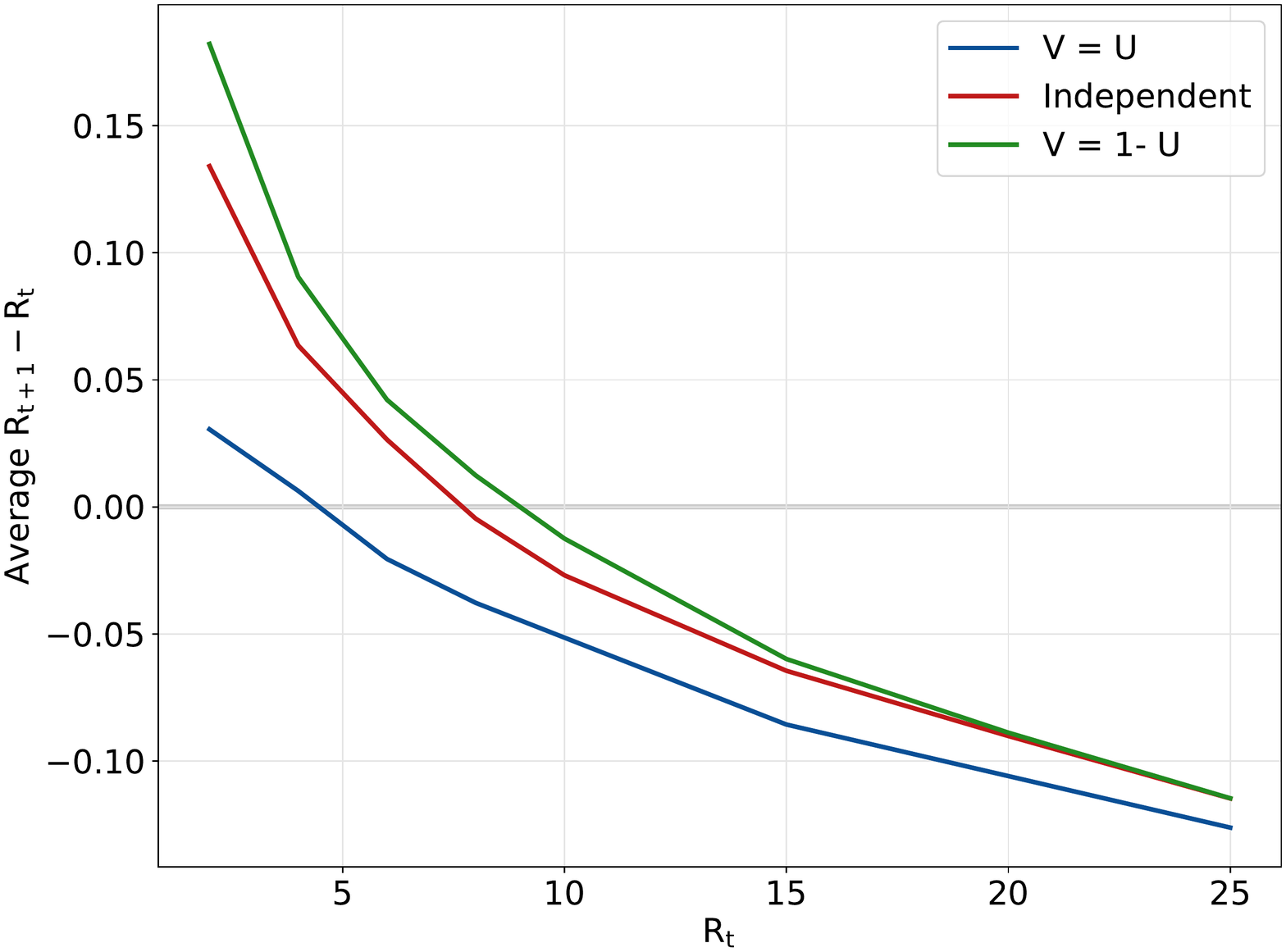} }
	\caption{\label{fig:mt:accrej}~ Time series and drift function properties of the distance between
		chains under the three simple acceptance step couplings. Left: we repeat the experiment shown in
		Figure~\ref{fig:mt:prop:rtrace} for the acceptance step couplings. Only the $V=U$ option allows the chains to get close enough to produce a significant probability of meeting. Right: we repeat the experiment shown in
		Figure~\ref{fig:mt:prop:contract} for these couplings. The $V=U$ coupling displays more contraction and a smaller $x$-intercept than the other options. The behavior of this intercept as a function of $d$ has a significant effect on meeting times.}
\end{figure}

As in the case of proposal couplings, we can also understand the meeting times associated with
different acceptance indicator couplings in terms of the contraction between chains. In
Figure~\ref{fig:mt:accrej:rtrace}, we present the average distance between chains under the three
simple acceptance indicator couplings. We run all of these using the maximal reflection coupling of
proposal distributions. As in the proposal coupling case we set $d=100$, we initializing each chain
with an independent draw from the target, and we use a sticky coupling to ensure $X_t=Y_t$ for $t
\geq \tau$. The $U=V$ coupling is able to produce sufficient contraction for meeting to take place,
while this seems out of reach for the independent and $V=1-U$ couplings.

We also consider the effect of different acceptance couplings on the drift $R_{t+1}-R_t$ as a
function of the current distance between chains $R_t$. We use the maximal reflection coupling at the
proposal step, and we run 10,000 replications for each value of $R_t$ and coupling option. As in the
case of proposal couplings, the time series behavior of $R_t$ shown in
Figure~\ref{fig:mt:accrej:rtrace} is consistent with relationship between $R_t$ and
$\E[R_{t+1}-R_{t} \g X_t, Y_t]$ observed in Figure~\ref{fig:mt:accrej:drift}.

Both of these tests support the impression that the choice of the acceptance coupling has a
significant effect on the contraction properties of the resulting chains. At a high level, it
appears that the right combination of proposal and acceptance strategies can lead to powerful
contraction between chains down to a point where meeting is reasonably probable under a maximal
coupling. The combination of a maximal reflection proposal coupling and a maximal acceptance coupling has this property while most other combinations do not, leading to a rapid growth in meeting times as a function of dimension.

\section{Discussion}

In the sections above we have identified a range of options for use in the design of RWM transition
kernel couplings. Our analysis and simulations suggest a few principles for the choice of these
elements, which we summarize as follows.

First, the coupling inequality imposes a significant constraint on the ability of any $\bar Q \in
\Gamma(Q,Q)$ to propose meetings. This suggests using a maximal or nearly maximal coupling to obtain
meetings at the highest rate possible. A hybrid approach may also be practical in some cases.
When a meeting is not proposed, it seems advantageous to minimize the degrees of freedom
in the displacement $y'-x'$ between proposals. These degrees of freedom accumulate in higher dimensions
and eventually create a barrier to contraction between chains. This may explain the poor performance of the
maximal independent coupling relative to the maximal semi-independent coupling.

Since the probability of a meeting is typically small until the chains are close together, it is
important to construct a transition kernel coupling that yields strong and persistent contraction between chains.
Surprisingly, the reflection couplings seem to do the best job of this among the proposal options
considered above. These couplings do not have good contraction properties on their own, but they
seem to set up a favorable interaction with the Metropolis step, especially with the $U=V$ coupling.
The precise nature of this interaction is an important open question. For now, it
appears safe to recommend the reflection coupling for inducing contraction
between chains.

The success of the reflection coupling raises two additional questions. First, we may consider the
extent to which this behavior depends on the log-concavity of the target distribution.
It seems reasonable to think that this coupling may not work as well with irregular
targets. With log-concave targets, like $\pi_d = \N(0, I_d)$, we can also ask how close the MH transition
kernels based on a maximal coupling with reflection residuals at the proposal step comes to a maximal coupling with optimal transport residuals of the transition kernels themselves. This question seems amenable to either theoretical
and numerical methods.

On the acceptance indicator side, the $U=V$ coupling has a strong a priori appeal. This coupling
gives the highest chance of turning a proposed meeting into an actual meeting. It also minimizes the
probability of accepting one proposal and rejecting the other, which often leads to a jump in the
distance between chains. While the $U=V$ coupling dominates the other options in this study, we recall
that we have focused our attention on the subset of acceptance indicator couplings in which the
conditional acceptance rates $a_x(\xy,\xyp)$ and $a_y(\xy,\xyp)$ agree with the MH rates $a(x,x')$
and $a(y,y')$. The analysis of more general acceptance couplings deserves further attention.

We emphasized the simple case of a multivariate normal target distribution in the simulations above.
It would be interesting to know the extent to which our conclusions generalize to more challenging
examples such as targets with heavy tails, multi-modality, difficult geometries, and examples in
large discrete state spaces. One might also extend the coupling strategies described above to other
common MH algorithms such HMC \citep{Duane:1987, Neal1993, neal2011mcmc}, the Metropolis-adjusted
Langevin algorithm \citep{roberts1996exponential}, and particle MCMC
\citep{andrieu:doucet:holenstein:2010}. We expect that couplings for these extensions would involve
some of the same principles as above, but with more moving parts and fewer symmetries to exploit.

Perhaps the most important open questions in the area of coupling design concern the development of
theoretical tools to relate proposal and acceptance options to meeting times. Such tools would enable
a systematic understanding of the interaction between proposal and acceptance steps. This would also
support work on how to pair these to produce as much possible contraction as possible between chains. One
approach to this might exploit the drift and minorization approach of \citet{Rosenthal1995,
	rosenthal2002quantitative}, especially the pseudo-small set concept of \citet{roberts2001small}. The
analyses and simulations above mark a step forward in our understanding of the options for coupling
MH transition kernels. They suggest that some options might be better than others and hint at
why.

\subsubsection*{Acknowledgements}
The author thanks Pierre E. Jacob, Yves Atchad\'e, and Niloy Biswas for their helpful comments. He also gratefully acknowledge support by the National Science Foundation through grant DMS-1844695.

%%%

\bibliography{dissertation_refs}{}

%%%

%\appendix

%\pagebreak
%\addtocontents{toc}{\protect\newpage} % add pagebreak to TOC
%\input{appendix.tex}

\end{document}